\newcommand{\comment}[1]{\textbackslash\!\!\textbackslash {\em #1}}
\tikzstyle{vertex} = [{circle,blue,draw,fill=black!50,inner sep=1pt}]
\tikzstyle{pathv} = [{regular polygon,regular polygon sides=3,draw,fill=violet!50,inner sep=1.2pt}]
\tikzstyle{basev} = [{violet,draw,fill=violet!50,inner sep=2pt}]
\tikzstyle{arc} = [-latex, dashed, double distance=2pt]
\newtheorem{redrule}{Reduction Rule}
\newtheorem{exrule}{Exchange Rule}
\newtheorem{theorem}{Theorem}
\newtheorem{lemma}{Lemma}[section]
\newtheorem{corollary}[lemma]{Corollary}
\newtheorem{proposition}[lemma]{Proposition}
\theoremstyle{definition}
\newtheorem{definition}[lemma]{Definition}
\title{A $5k$-vertex Kernel for $P_2$-packing}
\author{Wenjun Li\thanks{School of Computer and Communication Engineering, Changsha University of Science and Technology, Changsha, China. \texttt{liwenjun@csu.edu.cn}.  Supported by the National Natural Science Foundation of China (NSFC) under grant 61872048 and the Research Foundation of Education Bureau of Hunan Province under grant 21B0305.}
\and
Junjie Ye\thanks{Department of Computing, Hong Kong Polytechnic University, Hong Kong, China. \texttt{\{junjie.ye, yixin.cao\}@polyu.edu.hk}.  {Supported in part by the Hong Kong Research Grants Council (RGC) under grants 15221420 and 15201317, and the National Natural Science Foundation of China (NSFC) under grant 61972330.}}
\and
Yixin Cao\footnotemark[2]
}
\date{}
\begin{document}
\maketitle
\begin{abstract}
  The $P_2$-packing problem asks whether a graph contains $k$ vertex-disjoint (not necessarily induced) paths each of length two.  We continue the study of its kernelization algorithms, and develop a $5k$-vertex kernel.
\end{abstract}

\section{Introduction}
Packing problems make one of the most important family of problems in combinatorial optimization.  One example is $H$-packing for a fixed graph $H$, i.e., to find the maximum number of vertex-disjoint copies of $H$ from a graph $G$.
It is trivial when $H$ consists of a single vertex, and it is the well-known maximum matching problem, which can be solved in polynomial time, when $H$ is an edge.  The problem can be easily reduced to the maximum matching problem when each component of $H$ has at most two vertices.
The smallest $H$ on which the $H$-packing problem is NP-complete is $P_2$, the graph on three vertices and two edges \cite{kirkpatrick-83-graph-factor-problems}.   The $P_2$-packing problem is thus a natural starting point of investigating $H$-packing problems in general, and has been extensively studied \cite{kaneko-01, kelmans-04, kelmans-11-p2-packing, monnot-07-path-partition}. 

In the parameterized setting, the $P_2$-packing problem asks whether a graph $G$ contains $k$ vertex-disjoint $P_2$'s.  Recall that given an instance $(G, k)$, a {\em kernelization algorithm} produces in polynomial time an equivalent instance $(G', k')$---$(G, k)$ is a yes-instance if and only if $(G', k')$ is a yes-instance---such that $k' \leq k$.  The size of $G'$ is upper bounded by some function of $k'$, and $(G', k')$ is a \textit{polynomial kernel} when the function is a polynomial function.
Prieto and Sloper~\cite{prieto-06-packing-stars} first developed a $15 k$-vertex kernel for the $P_2$-packing problem, which were improved to $7 k$ \cite{wang-10-p2-packing} and then $6 k$ \cite{chen-19-kernel-packing-covering}.  We further improve it to $5 k$.

\begin{theorem}\label{thm:main}
	The $P_2$-packing problem has a $5k$-vertex kernel.
\end{theorem}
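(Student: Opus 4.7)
The plan is to refine the standard maximal-packing plus crown-reduction framework used in the earlier $6k$ and $7k$ kernels. I would compute a maximal $P_2$-packing $\mathcal{P}$ greedily; if $|\mathcal{P}|\ge k$ we are done, so we may assume $|\mathcal{P}|\le k-1$, giving $|V(\mathcal{P})|\le 3(k-1)$. The remaining set $I := V(G)\setminus V(\mathcal{P})$ is necessarily independent, since any edge inside $I$ together with a neighbour in $V(\mathcal{P})$ (or a second $I$-edge) would extend $\mathcal{P}$. The task then reduces to bounding $|I|$ by roughly $2k+O(1)$ after exhaustive application of a suite of reduction and exchange rules.

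First I would set up \emph{reduction rules} that dispose of trivial local configurations---isolated vertices, neighbourhood-dominated vertices, twin-like structures---and \emph{exchange rules} that, while keeping $|\mathcal{P}|$ fixed, locally rewire $\mathcal{P}$ so as to reveal more structure. For each packed $P_2 = abc$ (with $b$ the centre) I would keep track of the $I$-neighbourhoods of $a$, $b$, and $c$; a key exchange rule activates whenever some $v\in I$ is adjacent to two of $\{a,b,c\}$, or more generally whenever the union of $I$-neighbourhoods across several packs admits a system of distinct representatives supporting an augmenting substitution. After exhaustion I would prove a structural lemma stating that each pack in $\mathcal{P}$ contributes at most two vertices of $I$ on average, so that $|I|\le 2|\mathcal{P}|+c$ for a small constant $c$; combined with $|V(\mathcal{P})|\le 3(k-1)$ this yields $|V(G)|\le 5k$.

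The main obstacle is squeezing the $I$-bound from the $3k$ threshold implicit in prior work down to roughly $2k$. A vanilla crown-like argument (each leaf of a pack may demand one or two exclusive $I$-neighbours) is not tight enough; instead we need rules that cooperate across packs, triggering either an augmentation of $\mathcal{P}$ or a safe deletion whenever some Hall-type condition fails in the bipartite incidence between $I$ and $V(\mathcal{P})$. Verifying correctness of these cooperative rules is delicate because swapping vertices between $I$ and $V(\mathcal{P})$ can invalidate other rules and because an exchange rule must be paired with a guarantee that subsequent reductions do not cycle. I would control this by fixing a lexicographic monovariant (for instance, the multiset of $I$-degrees of the centres, or the number of packs whose centre has no $I$-neighbour) that strictly improves under each exchange, thereby guaranteeing termination in polynomial time and validating that the post-reduction counting yields the claimed $5k$ bound.
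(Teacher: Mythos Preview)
Your proposal contains a factual slip and, more importantly, does not actually supply the mechanism that closes the gap from $6k$ to $5k$.

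First, the slip: the set $I=V(G)\setminus V(\mathcal{P})$ is \emph{not} necessarily independent. Maximality of $\mathcal{P}$ only forbids a $P_2$ inside $I$, so $G[I]$ has maximum degree at most one; isolated edges in $I$ are perfectly possible (an edge $uv\subseteq I$ cannot be extended to a $P_2$ using a vertex of $V(\mathcal{P})$, because that vertex is already occupied). The paper explicitly works with both single-vertex and edge components of $G[I]$, and the distinction matters throughout.

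Second, and more seriously, your write-up is a description of what a proof \emph{should} accomplish rather than a proof. You correctly identify that the crux is pushing the bound on the extra vertices from roughly $3|\mathcal{P}|$ down to roughly $2|\mathcal{P}|$, but you then say only that ``rules that cooperate across packs'' and ``a Hall-type condition'' are needed, with termination controlled by ``a lexicographic monovariant''. None of these are specified, and the hard content of the paper lies precisely here. The paper does \emph{not} argue by bounding $|I|$ against $2k$; instead it assigns every extra vertex to a specific path of $\mathcal{P}$ to form ``units'', classifies units on $\ge 5$ vertices into \emph{democratic} types (net, pendant, $C_5$, bull) and \emph{despotic} types $(a,b)$ according to twigs and leaves, and then runs two families of alternating-path exchange rules (on twigs and on leaves) in auxiliary bipartite graphs to eliminate all despotic units except $(2,1)$, $(2,0)$, $(1,2)$. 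A final matching argument on a third bipartite graph either augments $\mathcal{P}$ or reveals a reducible set, and in the surviving partition the number of six-vertex units (net and $(2,1)$) is at most the number of $\le 4$-vertex units. The $5k$ bound then follows by averaging unit sizes, not by bounding $|I|$ directly. Your sketch contains none of this machinery, and the vague ``exchange rule activates whenever some $v\in I$ is adjacent to two of $\{a,b,c\}$'' is far from sufficient: such local swaps were already present in the $7k$ and $6k$ kernels and do not by themselves reach $5k$.
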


Our improvement, although modest, is a solid step toward the ultimate goal of this line of research, a kernel of only $3 k$ vertices.  Note that the problem remains NP-hard when $G$ has exactly $3k$ vertices.  Indeed, what Kirkpatrick and Hell~\cite{kirkpatrick-83-graph-factor-problems} proved is the NP-hardness of deciding whether a graph can be partitioned into vertex-disjoint $P_2$'s.  The existence of a $3k$-vertex kernel for the $P_2$-packing problem would indicate that it is morally equivalent to the $P_2$-partition problem.  Moreover, our algorithm implies directly an approximation algorithm of ratio $5/3$; a $4$-vertex kernel, provided that it satisfies certain properties, would imply an approximation algorithm of ratio $4/3$, better than the best known ratio $4/3 + \epsilon$ \cite{cygan-13-3d-matching}.  We remark that the problem is MAX-SNP-hard~\cite{kann-94}, and remains so even on bipartite graphs with maximum degree three \cite{monnot-07-path-partition}.  So if there would exist a $3k$-vertex kernel, it had to use a different technique.

We also note that there are efforts on a simplified version of the problem: Chang et al.~\cite{chang-14-p2packing} claimed a $5k$-vertex kernel for the problem on net-free graphs, which however contains a critical bug, according to Xiao and Kou~\cite{xiao-17-cvd}.

A very natural tool for the problem is a generalization of the well-studied crown structure~\cite{fellows-03-crown-reduction}.  Prieto and Sloper~\cite{prieto-06-packing-stars} used it in the first kernel, and all later work follows suit.  If we have a set $C$ of vertices such that (1) $G[C]$ does not contain any $P_2$, and (2) there are precisely $|N(C)|$ vertex-disjoint $P_2$'s in the subgraph induced by $N(C)\cup C$, then we may take these $|N(C)|$ paths and consider the subgraph $G - (N(C)\cup C)$.  This remains our main reduction rule; the difficulty, hence one of our contributions, is how to find such a structure if one exists.

As all the previous kernelization algorithms for the problem, we start from finding a maximal $P_2$-packing $\cal P$ in a greedy way.  Let $V({\cal P})$ denote the vertices on paths in $\cal P$, and we call other vertices, i.e., $V(G)\setminus V({\cal P})$, extra vertices.  Note that each component in $G - V({\cal P})$ contains at most two vertices, and we may remove it from the graph if it is not adjacent to $V({\cal P})$.  We may assume henceforth that each component in $G - V({\cal P})$ is connected with $V({\cal P})$.  By some classic results from matching theory, a reducible structure can be identified as long as the number of extra vertices, hence the number of components in $G - V({\cal P})$, is large enough.  This leads to the first kernel of $15 k$ vertices \cite{prieto-06-packing-stars}, and is the starting point of all later results.

A similar scheme is employed in later work.  It tries to find a $P_2$-packing larger than $\cal P$ using local search; once the local search gets stuck, a careful study of the configuration may reveal new reducible structures.
Observing that a pair of adjacent extra vertices is more helpful for this approach than two nonadjacent ones, Wang et al.~\cite{wang-10-p2-packing} used two simple exchange rules to consolidate extra vertices.  For example, if the two ends of a path on five vertices are extra vertices (i.e., the three vertices in the middle are picked to be a path in $\cal P$), they would change it so that the two extra vertices are adjacent.
The key idea of Chen et al.~\cite{chen-19-kernel-packing-covering} is that extra vertices adjacent to the ends of a path in $\cal P$ are usually more helpful than those adjacent to the middle vertex of the path.

Our local search procedure is more systematic and comprehensive; it actually subsumes observations from both Wang et al.~\cite{wang-10-p2-packing} and Chen et al.~\cite{chen-19-kernel-packing-covering}.
After the standard opening step, if we are not able to find reducible structure, we assign the extra vertices to paths in $\cal P$ such that each path receives a small number of them.  Each path, together with the assigned vertices, defines a \textit{unit}.
Based on their local structures, we put units with at least five vertices into two categories. 
We introduce several nontrivial exchange rules to migrate vertices from ``large" units to ``small" units.  Their applications may lead to (1) a larger $P_2$-packing than $\cal P$, or (2) a reducible structure, whereupon we repeat the procedure with a larger number of units or a smaller graph, respectively.  After all of our rules have been exhaustively applied, a unit contains at most six vertices, and the number of six-vertex units is upper bounded by the number of small units with four or three vertices.  The bound on the size of the kernel follows immediately.

\section{Preliminaries}
All graphs discussed in this paper are undirected and simple.
The vertex set and edge set of a graph $G$ are denoted by $V(G)$ and $E(G)$ respectively.
For a set $U\subseteq V(G)$ of vertices, we denote by $G[U]$ the subgraph induced by $U$, whose vertex set is $U$ and whose edge set comprises all edges of $G$ with both ends in $U$.   We use $G - U$ as a shorthand for $G[V(G) \setminus U]$, and it is further simplified as $G - v$ when $U$ consists of only one vertex $v$.  A {\em component} is a maximal connected induced subgraph, and an {\em edge component} is a component on two vertices.

\begin{redrule}\label{rul:component}
  If a component $C$ of $G$ has at most 6 vertices, delete $C$ and decrease $k$ by the maximum number of vertex-disjoint $P_2$'s in $C$.
\end{redrule}

The following technical definition would be crucial for our main reduction rule.
\begin{definition}
  Let $C$ be a set of vertices and $N(C) = \{v_1, v_2, \dots, v_{\ell}\}$.  We say that $C$ is a \emph{reducible set} of $G$ if the maximum degree in $G[C]$ is at most one and one of the following holds.
  \begin{enumerate}[(i)]
  \item There are $\ell$ distinct edge components $\{C_1, \dots, C_{\ell}\}$ in $G[C]$ such that $v_i$ is adjacent to $C_i$ for $1 \le i \le \ell$.
  \item There are $2\ell$ distinct components $\{C_1, \dots, C_{2\ell}\}$ in $G[C]$ such that $v_i$ is adjacent to $C_{2i-1}$ and $C_{2i}$ for $1 \le i \le \ell$.
  \end{enumerate}
\end{definition}
For readers familiar with previous work, a remark is worthwhile here.  Our reducible set is a generalization of the well-known crown decomposition \cite{fellows-03-crown-reduction}.  Our definition (i) coincides with the ``fat crown'' defined in \cite{prieto-06-packing-stars}.  Our definition (ii) coincides with the ``double crown'' defined in \cite{prieto-06-packing-stars} when each component of $G[C]$ is a single vertex.  In definition (ii), however, we allow a mixture of single-vertex components and edge components.  As a matter of fact, one may define the reducible set in a way that an edge component is regarded as two single-vertex components.  This definition would work for our algorithm and might reveal more reducible sets.  However, it would slightly complicate our presentation without helping our analysis in the worst case, and hence we choose to use the simpler one.
\begin{redrule}\label{rul:p2-crown}
  If there is a reducible set $C$, delete $N(C) \cup C$ and decrease $k$ by $|N(C)|$.
\end{redrule}

The safeness of Rule \ref{rul:p2-crown} can be easily adapted from a similar proof of Prieto and Sloper~\cite{prieto-06-packing-stars}.  For the sake of completeness, we include it here.  We use $\mathtt{opt}(G)$ to denote the maximum number of vertex-disjoint $P_2$'s in graph $G$.
\begin{lemma}\label{lem:p2-crown}
  If $C$ is a reducible set of graph $G$, then $\mathtt{opt}(G) = \mathtt{opt}(G- (N(C) \cup C)) + |N(C)|$.
\end{lemma}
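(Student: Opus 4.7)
The plan is to establish the equality by proving the two inequalities separately. Let me write $X := N(C) \cup C$ and $\ell := |N(C)|$ for brevity.

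The lower bound $\mathtt{opt}(G) \geq \mathtt{opt}(G - X) + \ell$ will be constructive. Starting from any optimal $P_2$-packing of $G - X$, I append $\ell$ pairwise vertex-disjoint $P_2$'s supported inside $X$, one centered at each $v_i$. In case (i), the two adjacent vertices of the edge component $C_i$ together with $v_i$ form a $P_2$, since $v_i$ is adjacent to at least one endpoint of $C_i$. In case (ii), I pick one vertex in each of $C_{2i-1}$ and $C_{2i}$ that is adjacent to $v_i$, so that the two chosen vertices, together with $v_i$ in the middle, form a $P_2$. Vertex-disjointness is immediate from the pairwise disjointness of the $v_i$'s and of the components $C_j$.

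The upper bound $\mathtt{opt}(G) \leq \mathtt{opt}(G - X) + \ell$ rests on the following structural observation: every $P_2$ of $G$ that intersects $X$ must use at least one vertex of $N(C)$. To verify this, let $P = u\text{-}v\text{-}w$ be such a path. If $P$ is not contained in $X$, then some vertex of $P$ that lies in $X$ has a neighbor on $P$ outside $X$; this vertex cannot lie in $C$, because every neighbor of $C$ outside $C$ belongs to $N(C)$ by definition, so the vertex lies in $N(C)$. If instead $P$ is entirely inside $X$, it cannot lie entirely inside $C$: otherwise its middle vertex would have two neighbors inside $C$, contradicting the max-degree-one hypothesis on $G[C]$. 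Either way some vertex of $P$ lies in $N(C)$.

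Given the observation, the upper bound follows directly. Take any optimal $P_2$-packing $\mathcal{P}$ of $G$ and split it into the paths disjoint from $X$ (which form a $P_2$-packing of $G - X$, hence number at most $\mathtt{opt}(G - X)$) and the paths meeting $X$ (which by the observation occupy pairwise distinct vertices of $N(C)$, hence number at most $\ell$). The main delicate point I anticipate is precisely that structural observation, whose ``path contained in $X$'' subcase is the place where the max-degree-one hypothesis is indispensable; everything else is routine case analysis.
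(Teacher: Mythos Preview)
Your proof is correct and follows essentially the same approach as the paper. The lower bound is identical; for the upper bound, the paper phrases it as first bounding the number of paths using $N(C)$ and then observing $\mathtt{opt}(G-N(C)) = \mathtt{opt}(G-X)$ because $G[C]$ has no $P_2$, whereas you fold these two steps into the single structural observation that any $P_2$ meeting $X$ must hit $N(C)$---but this is a cosmetic reorganization of the same argument, hinging on the same use of the max-degree-one hypothesis.
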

\begin{proof}
  Let $A = N(C)$ and $G' = G - (A \cup C)$.  For each vertex $v\in A$, we can pick in $G[C]$ an edge component or two vertices from two components to form a $P_2$. Thus we have $|A|$ vertex-disjoint $P_2$'s using only vertices in $A \cup C$.  Together with a maximum $P_2$-packing of $G'$, we have $\mathtt{opt}(G) \ge \mathtt{opt}(G') + |A|$.

  On the other hand, any $P_2$-packing of $G$ contains at most $|A|$ vertex-disjoint $P_2$'s involving vertices in $A$. Hence $\mathtt{opt}(G) \le \mathtt{opt}(G - A) + |A|$. By definition, the maximum degree in $G[C]$ is at most one, and hence vertices of $C$ participate in no $P_2$ in $G - A$.  Therefore, $\mathtt{opt}(G - A) = \mathtt{opt}(G')$ and $\mathtt{opt}(G) \le \mathtt{opt}(G') + |A|$.
\end{proof}

To identify reducible sets, we will rely on tools from matching theory.  In several steps of our algorithm, we will construct an auxiliary bipartite graph $B$; to avoid confusion, we use nodes to refer to elements in $V(B)$.  
The two sides of $B$ are denoted by $L$ and $R$.
We will use the Hopcroft-Karp algorithm: 
\begin{lemma}[\cite{hopcroft-73-bipartite-matching}]\label{lem:hopcroft-matching}
  Given a bipartite graph $B$, we can find in polynomial time a matching saturating $L$ or a set $L' \subseteq L$ such that there is a matching between $N(L')$ and $L'$ that saturates $N(L')$.
\end{lemma}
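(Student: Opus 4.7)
The plan splits into three pieces. First, run the Hopcroft-Karp algorithm on $B$ in polynomial time; if the maximum matching it returns saturates $L$, we are done. Otherwise Hall's condition fails on $L$ itself, and the family of subsets of $L$ with $|N(L')| < |L'|$ is nonempty. Second, I extract an inclusion-wise minimal member $L'$ of this family by starting from $L' = L$ and, in at most $|L|$ iterations, greedily attempting to delete one vertex at a time, keeping the deletion iff $|N(L' \setminus \{v\})| < |L' \setminus \{v\}|$ still holds. Each test is an elementary size comparison on a neighborhood, so the whole procedure is polynomial.

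The third and conceptually central step is to prove that this minimal $L'$ automatically admits a matching between $N(L')$ and $L'$ that saturates $N(L')$; I would argue by contradiction. If no such matching exists, applying Hall's theorem to the bipartite subgraph induced on $L' \cup N(L')$, now with $N(L')$ as the side to be saturated, yields a set $R' \subseteq N(L')$ such that $|N(R') \cap L'| < |R'|$. Define $L^* := L' \setminus N(R')$. Every vertex of $L^*$ has no neighbor in $R'$ by construction, so $N(L^*) \subseteq N(L') \setminus R'$ and hence $|N(L^*)| \le |N(L')| - |R'|$. Meanwhile $|L^*| = |L'| - |N(R') \cap L'| > |L'| - |R'|$, and combining with the hypothesis $|N(L')| < |L'|$ gives $|N(L^*)| < |L^*|$. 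Since every vertex of $R'$ has a neighbor in $L'$, the set $N(R') \cap L'$ is nonempty and so $L^* \subsetneq L'$, contradicting the minimality of $L'$.

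I expect the Hopcroft-Karp invocation and the greedy minimization to be essentially mechanical. The main obstacle is the delicate bookkeeping in the contradiction argument: invoking Hall's theorem twice with the roles of the two sides swapped, verifying that the inclusion $L^* \subsetneq L'$ is indeed strict, and ensuring that each of the strict inequalities $|N(L')| < |L'|$ and $|N(R') \cap L'| < |R'|$ is used to turn a non-strict bound into a strict one when combined to derive $|N(L^*)| < |L^*|$.
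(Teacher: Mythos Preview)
Your third step---the Hall-based contradiction---is exactly the argument the paper gives and is correct once an inclusion-wise minimal $L'$ is in hand. The gap is in your second step: deleting one vertex at a time does not yield an inclusion-wise minimal Hall violator. Your test $|N(L'\setminus\{v\})|<|L'\setminus\{v\}|$ only rules out \emph{single-vertex} deletions; it says nothing about smaller subsets. Concretely, take $L=\{a,b,c,d\}$ and $R=\{x,y,z\}$ with $a,b$ adjacent only to $x$ and $c,d$ each adjacent to both $y$ and $z$. Then $|N(L)|=3<4$, but every three-element subset $S\subseteq L$ has $|N(S)|=3=|S|$, so your greedy (whether read as a single pass over the vertices or as repeat-until-stable) halts at $L'=L$. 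Yet $\{a,b\}\subsetneq L'$ has $|N(\{a,b\})|=1<2$, so $L'$ is not inclusion-wise minimal, and the contradiction in your third step---which needs every proper $L^*\subsetneq L'$ to satisfy $|N(L^*)|\ge|L^*|$---is unavailable.

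The repair is to extract $L'$ directly from the maximum matching $M$ rather than by greedy deletion. Let $U\subseteq L$ be the $M$-unsaturated vertices and let $L'$ be the set of $L$-vertices reachable from $U$ via $M$-alternating paths. Then $N(L')$ is exactly the set of reachable $R$-vertices, each of which is $M$-saturated (otherwise $M$ would admit an augmenting path) with its match lying in $L'$; hence $M$ restricted to $L'\cup N(L')$ already saturates $N(L')$, and no minimality argument is needed at all. The paper simply asserts that a minimal violator can be found in polynomial time without spelling out a procedure; this alternating-path construction is the standard way to make that step effective.
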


\section{The unit partition}

We first find a maximal $P_2$-packing $\cal P$ of the input graph $G$, and use these paths as ``bases'' to partition $V(G)$ into $|\cal P|$ units.  We may return a trivial true instance and terminate the algorithm when $|{\cal P}|\ge k$.  Henceforth we assume $|{\cal P}|< k$.  We then locally change the units so that they satisfy certain properties.  During the process, if we find (1) a $P_2$-packing larger than $\cal P$, or (2) a reducible set, then we restart the procedure with a new $P_2$-packing, or a new graph respectively.

Denote by $V({\cal P})$ the set of vertices in the paths in $\cal P$.  The maximality of $\cal P$ guarantees that each component of the subgraph $G - V({\cal P})$ is either a single vertex or an edge.  We construct an auxiliary bipartite graph $B_1$ as follows:
\begin{itemize}
\item for each component $C$ of $G - V({\cal P})$, introduce a node $u_C$ into $L$;
\item for each vertex $v \in V({\cal P})$, introduce two nodes $v^1, v^2$ into $R$; and
\item add edges $u_Cv^1$ and $u_Cv^2$ if vertex $v$ is adjacent to $V(C)$ in $G$.
\end{itemize}

\begin{lemma}\label{lem:matching}
  If there is no matching of $B_1$ saturating all nodes in $L$, then we can find in polynomial time a reducible set.
\end{lemma}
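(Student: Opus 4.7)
The plan is to apply Lemma~\ref{lem:hopcroft-matching} to $B_1$. Since by hypothesis no matching of $B_1$ saturates $L$, the lemma returns, in polynomial time, a subset $L' \subseteq L$ together with a matching $M$ between $N(L')$ and $L'$ that saturates $N(L')$, where $|N(L')| < |L'|$. I then convert this certificate directly into a reducible set of type (ii).

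The key observation is that by the construction of $B_1$, the two copies $v^1$ and $v^2$ of each vertex $v \in V({\cal P})$ have \emph{identical} neighborhoods in $L$. Hence, for every such $v$, either both of $v^1, v^2$ lie in $N(L')$ or neither does. Set $A = \{v \in V({\cal P}) : v^1 \in N(L')\}$ and $\ell = |A|$, so that $|N(L')| = 2\ell$. Because $M$ saturates $N(L')$, for each $v \in A$ it pairs $v^1$ and $v^2$ with two distinct nodes $u_{C_v^1}, u_{C_v^2} \in L'$, whose corresponding components $C_v^1, C_v^2$ of $G - V({\cal P})$ are both adjacent to $v$ in $G$; and since $M$ is a matching, the $2\ell$ components obtained as $v$ ranges over $A$ are pairwise distinct.

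I finish by setting $C^* = \bigcup_{u_C \in L'} V(C)$ and checking the definition. By the maximality of $\cal P$, each component of $G - V({\cal P})$ is a single vertex or an edge, so the maximum degree in $G[C^*]$ is at most one. A vertex $v \in V({\cal P})$ lies in $N_G(C^*)$ iff some $u_C \in L'$ has $u_C v^1 \in E(B_1)$, iff $v \in A$; hence $N_G(C^*) = A$. Combined with the $2\ell$ distinct matched components produced above, this exhibits $C^*$ as a reducible set satisfying condition (ii). The only conceptual point is recognizing that the duplication $v^1, v^2$ in the definition of $B_1$ is engineered precisely so that any Hall-violating certificate automatically supplies two distinct components adjacent to each $v \in A$, which is exactly what condition (ii) demands; there is no real obstacle, and the polynomial running time is inherited from Lemma~\ref{lem:hopcroft-matching}.
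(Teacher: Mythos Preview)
Your proof is correct and follows essentially the same approach as the paper's own argument: apply Lemma~\ref{lem:hopcroft-matching} to obtain $L'$ with a matching saturating $N(L')$, use the fact that $v^1$ and $v^2$ have identical neighbourhoods in $B_1$ to conclude they enter or leave $N(L')$ together, and then take the union of the components indexed by $L'$ as the reducible set of type~(ii). Your write-up is slightly more explicit in a few places (e.g.\ spelling out why $N_G(C^*)=A$ and why the maximum degree in $G[C^*]$ is at most one), but the logical structure is identical.
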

\begin{proof}
  By Lemma~\ref{lem:hopcroft-matching}, we find in polynomial time a subset $L'\subseteq L$ such that there is a matching of $B_1$ between $N_{B_1}(L')$ and $L'$ that saturates all nodes in $N_{B_1}(L')$. Let $C'$ be the vertices in the components represented by nodes in $L'$, and let $A'$ be the set of vertices represented by nodes in $N_{B_1}(L')$.  We claim that $C'$ is a reducible set. Note that for each vertex $v \in V({\cal P})$, the set $N_{B_1}(L')$ contains either both or neither of $\{v^1, v^2\}$. For each $v \in A'$, the two components in $G[C']$, whose nodes are matched to $v^1$ and $v^2$, are adjacent to $v$. By the construction of $B_1$, $G[C']$ has maximum degree at most one and $N(C') = A'$. Hence $C'$ is a reducible set.
\end{proof}

In the following, we may assume that we have a matching $M$ of $B_1$ saturating all nodes in $L$.  For a path $P\in {\cal P}$ on vertices $u, v, w$, we create a \emph{unit} that contains $u, v, w$, and all vertices in those components matched to nodes $u^1, u^2, v^1, v^2, w^1, w^2$ by $M$.  Abusing the notation, we also use unit to refer to the subgraph induced by it, which is always connected.  The path $P$ is the {\em base path} of this unit.  Since all nodes in $L$ are matched in $M$, the collection of units is a partition of the vertex set $V(G)$.
If each unit has five or fewer vertices, then $|V(G)|< 5 k$ and we are done.
By construction, a unit may contain at most six components (each of at most two vertices) of $G - V({\cal P})$, hence up to $3 + 6 * 2 = 15$ vertices.  There are a prohibitive number of graphs on 15 vertices; the following exchange rule excludes most of them from our further consideration.

\begin{exrule}\label{rule:single-unit}
  If a unit contains two vertex-disjoint $P_2$'s, then we find a larger maximal $P_2$-packing than $\cal P$ as follows.  Let $P^0$ be the base path of the unit, and $P^1$ and $P^2$ the two vertex-disjoint $P_2$'s, then we take ${\cal P}\setminus \{P^0\} \cup\{P^1, P^2\}$.
\end{exrule}

It is easy to see that the partition produced by Exchange Rule~\ref{rule:single-unit} is valid: Vertices in different parts are disjoint. Once Exchange Rule~\ref{rule:single-unit} is applied, we restart the procedure with a new $P_2$-packing and hence a new unit partition.  In the following we may assume that a unit has precisely one vertex-disjoint $P_2$.

Let us motivate our main technical definitions with a simple example.  The two graphs in Figure~\ref{fig:two-units} comprise the same pair of units, (one cycle and one star, both on five vertices,) but are connected by different edges in between.  The two units behave very differently: While the first unit (cycle) is willing to sacrifice any pair of adjacent vertices, there is a vertex not affordable to lose by the second unit (star).

\begin{figure}[h]
  \centering
  \begin{subfigure}[b]{.35\linewidth}
    \centering
    \begin{tikzpicture}[scale=.5]
      \begin{scope}[every node/.style={vertex},]
        \node (a) at (1,1.5) {};
        \node (b) at (3,1.5) {};
      \end{scope}
      \begin{scope}[every node/.style={pathv}]
        \node (u) at (0,0) {};
        \node (v) at (2,0) {};
        \node (w) at (4,0) {};
      \end{scope}
      \draw[thick] (u) -- (v) -- (w);
      \draw (u) -- (a) -- (b) -- (w);
      \begin{scope}[every node/.style={vertex}, shift={(6, 0)}]
        \node (c) at (1,1.5) {};
        \node (d) at (3,1.5) {};
      \end{scope}
      \begin{scope}[every node/.style={pathv}, shift={(6, 0)}]
        \node (x) at (0,0) {};
        \node (y) at (2,0) {};
        \node (z) at (4,0) {};
      \end{scope}
      \draw[thick] (x) -- (y) -- (z);
      \draw (c) -- (y) -- (d);

      \draw[blue, thick, dashed] (w) -- (c);
    \end{tikzpicture}
    \caption{}
  \end{subfigure}  
  \qquad\qquad
  \begin{subfigure}[b]{.35\linewidth}
    \centering
    \begin{tikzpicture}[scale=.5]
      \begin{scope}[every node/.style={vertex},]
        \node (a) at (1,1.5) {};
        \node (b) at (3,1.5) {};
      \end{scope}
      \begin{scope}[every node/.style={pathv}]
        \node (u) at (0,0) {};
        \node (v) at (2,0) {};
        \node (w) at (4,0) {};
      \end{scope}
      \draw[thick] (u) -- (v) -- (w);
      \draw (u) -- (a) -- (b) -- (w);
      \begin{scope}[every node/.style={vertex}, shift={(6, 0)}]
        \node (c) at (1,1.5) {};
        \node (d) at (3,1.5) {};
      \end{scope}
      \begin{scope}[every node/.style={pathv}, shift={(6, 0)}]
        \node (x) at (0,0) {};
        \node (y) at (2,0) {};
        \node (z) at (4,0) {};
      \end{scope}
      \draw[thick] (x) -- (y) -- (z);
      \draw (c) -- (y) -- (d);

      \draw[blue, thick, dashed] (b) -- (y);
    \end{tikzpicture}
    \caption{}
  \end{subfigure}  
  \caption{Each graph consists of two units on five vertices.  In a unit, the three triangle vertices and the thick edges make the base path.   Other solid edges are inside the unit, while a dashed edge crosses two units.  We can find three vertex-disjoint $P_2$'s from (a), but not (b).}
  \label{fig:two-units}
\end{figure}

We say that a unit is \emph{democratic} if it contains one of the graphs in Figure~\ref{fig:unit1} as a spanning subgraph.  We call a democratic unit a net-, pan-, $C_5$-, or bull-unit if it contains net, pan, $C_5$, or bull but none of the previous ones as a subgraph. This order ensures, among others, that a bull-unit has to be an induced bull: We can find a pan or a $C_5$ if any additional edge is added to a bull.  (Indeed, only pan-unit can have extra edges, but we only need this property for bull-units.)  A bull-unit contains a unique vertex of degree 2, which we call the \emph{nose} of the bull-unit.
It is easy to check that \textit{there remains a $P_2$ in a democratic unit after any vertex removed}.

 \begin{figure}[h]
   \centering
   \begin{subfigure}[b]{.15\linewidth}
    \centering
     \begin{tikzpicture}[every node/.style={vertex},scale=.5]
       \begin{scope}
         \node (a) at (0,3.5) {};
         \node (b) at (0,2) {};
         \node (f) at (2.2,-0.6) {};
       \end{scope}
       \begin{scope}
         \node (c) at (-1,0.5) {};
         \node (d) at (1,0.5) {};
         \node (e) at (-2.2,-0.6) {};
       \end{scope}
       \draw (a) -- (b) -- (c) (d) -- (f);
       \draw (b) -- (d);
       \draw (d) -- (c) -- (e);
     \end{tikzpicture}
    \caption{net}{}
  \end{subfigure}  
   \qquad
   \begin{subfigure}[b]{.15\linewidth}
    \centering
     \begin{tikzpicture}[every node/.style={vertex},scale=.5]
       \node (a) at (0,3.2) {};
       \node (b) at (0, 1.6) {};
       \node (c) at (-1.25, 0.3) {};
       \node (e) at (1.25, 0.3) {};
       \node (d) at (0, -1) {};
       \draw (a) -- (b) -- (c) -- (d) -- (e) -- (b);
     \end{tikzpicture}
    \caption{pan}
  \end{subfigure}  
   \qquad
   \begin{subfigure}[b]{.15\linewidth}
    \centering
     \begin{tikzpicture}[every node/.style={vertex},scale=.5]
       \node (a) at (0.65,0) {};
       \node (b) at (0,1.65) {};
       \node (c) at (1.5,3) {};
       \node (d) at (3,1.65) {};
       \node (e) at (2.35,0) {};
       \draw (a)-- (b)--(c)--(d)--(e)--(a);
     \end{tikzpicture}
    \caption{$C_5$}
  \end{subfigure}  
   \qquad
   \begin{subfigure}[b]{.15\linewidth}
    \centering
     \begin{tikzpicture}[every node/.style={vertex},scale=.5]
       \node (c) at (-0.9,0.5) {};
       \node (b) at (0.9,0.5) {};
       \node["$v$" below] (a) at (0,-1) {};
       \node (d) at (-2.2981,1.92) {};
       \node (e) at (2.2981,1.92) {};
       \draw (d) -- (c) -- (a) -- (b) -- (e);
       \draw (b) -- (c);
     \end{tikzpicture}
    \caption{bull}
  \end{subfigure}  
   \caption{Four subgraphs characterizing democratic units.  A pan-unit may contain extra edges not shown here, while the other three cannot.  The vertex $v$ in (d) is the nose of the bull-unit.}
   \label{fig:unit1}
 \end{figure}

A unit that has more than four vertices but is not democratic is called \emph{despotic}.  For example, the second unit in Figure~\ref{fig:two-units}(a) is despotic.  In that example, even though the only degree-four vertex in it cannot be removed from the unit, all the other four vertices are dispensable.  For such a unit, it does not make much sense to distinguish the two vertices in the base path and the other two vertices: We may take any $P_2$ from this unit as the base path, but it clearly has to contain the degree-4 vertex.

 \begin{figure}[h]
  \centering
   \begin{subfigure}[b]{.12\linewidth}
    \centering
    \begin{tikzpicture}[scale=.5]
      \node[basev] (root) at (1.5,0) {};
      \begin{scope}[every node/.style={vertex}]
      \node (a) at (0,1.5) {};
      \node (a1) at (0,3) {};
      \node (b) at (1,1.5) {};
      \node (b1) at (1,3) {};
      \node (c) at (2,1.5) {};
      \node (c1) at (2,3) {};
      \node (d) at (3,1.5) {};
      \node (d1) at (3,3) {};
      \end{scope}
      \draw (root) -- (a) -- (a1);
      \draw (root) -- (b) -- (b1);
      \draw (root) -- (c) -- (c1);		
      \draw (root) -- (d) -- (d1);
    \end{tikzpicture}
    \caption{$(4, 0)$}
  \end{subfigure}  
   \quad
   \begin{subfigure}[b]{.12\linewidth}
    \centering
    \begin{tikzpicture}[scale=.5]
      \node[basev] (root) at (1.5,0) {};
      \begin{scope}[every node/.style={vertex}]
      \node (a) at (0,1.5) {};
      \node (a1) at (0,3) {};
      \node (b) at (1,1.5) {};
      \node (b1) at (1,3) {};
      \node (c) at (2,1.5) {};
      \node (c1) at (2,3) {};
      \node (d) at (3,1.5) {};
      \end{scope}
      \draw (root) -- (a) -- (a1);
      \draw (root) -- (b) -- (b1);
      \draw (root) -- (c) -- (c1);		
      \draw (root) -- (d);
    \end{tikzpicture}
    \caption{$(3, 1)$}
  \end{subfigure}  
   \quad
   \begin{subfigure}[b]{.1\linewidth}
    \centering
    \begin{tikzpicture}[scale=.5]
      \node[basev] (root) at (1, 0) {};
      \begin{scope}[every node/.style={vertex}]
      \node (a) at (0,1.5) {};
      \node (a1) at (0,3) {};
      \node (b) at (1,1.5) {};
      \node (b1) at (1,3) {};
      \node (c) at (2,1.5) {};
      \node (c1) at (2,3) {};
      \end{scope}
      \draw (root) -- (a) -- (a1);
      \draw (root) -- (b) -- (b1);
      \draw (root) -- (c) -- (c1);		
    \end{tikzpicture}
    \caption{$(3, 0)$}
  \end{subfigure}  
   \quad
   \begin{subfigure}[b]{.12\linewidth}
    \centering
    \begin{tikzpicture}[scale=.5]
      \node[basev] (root) at (1.5,0) {};
      \begin{scope}[every node/.style={vertex}]
      \node (a) at (0,1.5) {};
      \node (a1) at (0,3) {};
      \node (b) at (1,1.5) {};
      \node (b1) at (1,3) {};
      \node (c) at (2,1.5) {};
      \node (d) at (3,1.5) {};
      \end{scope}
      \draw (root) -- (a) -- (a1);
      \draw (root) -- (b) -- (b1);
      \draw (root) -- (c);		
      \draw (root) -- (d);
    \end{tikzpicture}
    \caption{$(2, 2)$}
  \end{subfigure}  
   \quad
   \begin{subfigure}[b]{.1\linewidth}
    \centering
    \begin{tikzpicture}[scale=.5]
      \node[basev] (root) at (1., 0) {};
      \begin{scope}[every node/.style={vertex}]
      \node (a) at (0,1.5) {};
      \node (a1) at (0,3) {};
      \node (b) at (1,1.5) {};
      \node (b1) at (1,3) {};
      \node (c) at (2,1.5) {};
      \end{scope}
      \draw (root) -- (a) -- (a1);
      \draw (root) -- (b) -- (b1);
      \draw (root) -- (c);		
    \end{tikzpicture}
    \caption{$(2, 1)$}
  \end{subfigure}  
   \quad
   \begin{subfigure}[b]{.1\linewidth}
    \centering
    \begin{tikzpicture}[scale=.5]
      \node[basev] (root) at (.75,0) {};
      \begin{scope}[every node/.style={vertex}]
      \node (a) at (0,1.5) {};
      \node (a1) at (0,3) {};
      \node (b) at (1.5, 1.5) {};
      \node (b1) at (1.5, 3) {};
      \end{scope}
      \draw (root) -- (a) -- (a1);
      \draw (root) -- (b) -- (b1);
    \end{tikzpicture}
    \caption{$(2, 0)$}
  \end{subfigure}  

   \begin{subfigure}[b]{.12\linewidth}
    \centering
    \begin{tikzpicture}[scale=.5]
      \node[basev] (root) at (2, 0) {};
      \begin{scope}[every node/.style={vertex}]
      \node (a) at (0,1.5) {};
      \node (a1) at (0,3) {};
      \node (b) at (1,1.5) {};
      \node (c) at (2,1.5) {};
      \node (d) at (3,1.5) {};
      \node (e) at (4,1.5) {};
    \end{scope}
    \draw (root) -- (a) -- (a1);
      \draw (root) -- (b);
      \draw (root) -- (c);
      \draw (root) -- (d);
      \draw (root) -- (e);
    \end{tikzpicture}
    \caption{$(1, 4)$}
  \end{subfigure}  
   \quad
   \begin{subfigure}[b]{.12\linewidth}
    \centering
    \begin{tikzpicture}[scale=.5]
      \node[basev] (root) at (1.5, 0) {};
      \begin{scope}[every node/.style={vertex}]
      \node (a) at (0,1.5) {};
      \node (a1) at (0,3) {};
      \node (b) at (1,1.5) {};
      \node (c) at (2,1.5) {};
      \node (d) at (3,1.5) {};
    \end{scope}
      \draw (root) -- (a) -- (a1);
      \draw (root) -- (b);
      \draw (root) -- (c);
      \draw (root) -- (d);
    \end{tikzpicture}
    \caption{$(1, 3)$}
  \end{subfigure}  
   \quad
   \begin{subfigure}[b]{.1\linewidth}
    \centering
    \begin{tikzpicture}[scale=.5]
      \node[basev] (root) at (1., 0) {};
      \begin{scope}[every node/.style={vertex}]
      \node (a) at (0,1.5) {};
      \node (a1) at (0,3) {};
      \node (b) at (1,1.5) {};
      \node (c) at (2,1.5) {};
    \end{scope}
      \draw (root) -- (a) -- (a1);
      \draw (root) -- (b);
      \draw (root) -- (c);
    \end{tikzpicture}
    \caption{$(1, 2)$}
  \end{subfigure}  
   \quad
   \begin{subfigure}[b]{.12\linewidth}
    \centering
    \begin{tikzpicture}[scale=.5]
      \node[basev] (root) at (1.5, 0) {};
      \begin{scope}[every node/.style={vertex}]
      \node (a) at (0,1.5) {};
      \node (b) at (1,1.5) {};
      \node (c) at (2,1.5) {};
      \node (d) at (3,1.5) {};
    \end{scope}
      \draw (root) -- (a);
      \draw (root) -- (b);
      \draw (root) -- (c);
      \draw (root) -- (d);
    \end{tikzpicture}
    \caption{$(0, 4)$}
  \end{subfigure}  
  \caption{Ten subgraphs characterizing despotic units.  The square vertices are the core vertices, while the round vertices make the peripheral.  Each edge component of the peripheral is a twig, and each single-vertex component is a leaf.  Each unit in the first row has at least two twigs, while the second row has at most one.  For the units with the same number of twigs, they are ordered by the number of leaves (hence the total number of vertices).  Note that a despotic unit may contain extra edges not shown here.}
  \label{fig:unit2}
\end{figure}

Each graph $F$ in Figure \ref{fig:unit2} has a special vertex $v$ (the square vertex at the bottom) such that its removal leaves a graph of maximum degree at most one.   In other words, each component of $F - v$ is an edge or an isolated vertex; we call them a \emph{twig} and a \emph{leaf} respectively.  We label $F$ by a pair $(a, b)$, which are the numbers of, respectively, twigs and leaves of $F$.
{We say that a despotic unit is an $(a, b)$-unit if it can be made, by deleting edges, graph $(a, b)$ but not any graph $(a', b')$ with $a' > a$.}
A consequence of enforcing this order (of maximizing twigs) is that there cannot be any edge between two leaves in any unit: For example, if an edge is added to connect the two leaves of graph $(2, 2)$, then it also contains graph $(3, 0)$ as a subgraph.   For a unit $U$, we also use $d_1(U)$ and $d_2(U)$ to denote the numbers of, respectively, twigs and leaves; i.e., $d_1(U) = a$ and $d_2(U) = b$ when $U$ is an ($a, b$)-unit.  The special vertex is the \emph{core}, while all other vertices (including twigs and leaves) the \emph{peripheral}, of the unit.  It would be clear that there exists still a $P_2$ in a despotic unit after we delete a twig or leaf from it.

In passing we should mention that although we draw only one edge between the core and a twig, both vertices in the twig can be adjacent to the core.  For our algorithm, we always consider a twig as a whole.

We are left with the \emph{small units} (of three or four vertices), which turn out to be singular in our algorithm.  Although they are the smallest units, great care is needed to deal with them.   Recall that our aim is to bound the number of vertices by summing all units in the final unit partition of the final graph; hence we would like to maximize the number of small units.  In this sense, the role of a small unit as an ``exporter'' would be marginal, if possible, and hence we do not categorize them into many types.  We abuse the notation to denote them in a  similar way as despotic units. 
A four-vertex unit is a $(0, 3)$-unit if it has precisely three edges and all of them share a common end, and a $(0, 1)$-unit otherwise.  A three-vertex unit is a $(0, 0)$-unit, disregarding whether it has two or three edges.  See Figure~\ref{fig:small-units} for an illustration of small units. Note that $(0,1)$-unit and $(0,0)$-unit are special in the sense that they have three core vertices.\footnote{It seems to make more sense to use type $(0, 2)$ instead of $(0, 1)$ for Figure~\ref{fig:small-units}(b), because both ends of the path can be ``sacrificed.''  The reason is that we want to avoid edges between leafs from the same unit; see Proposition~\ref{pro:extra-edges}.  Of course, we may introduce a new type of $C_4$-units, but that would complicate our analysis with no direct benefit.}

\begin{figure}[h]
  \centering
  \begin{subfigure}[b]{.15\linewidth}
    \centering
    \begin{tikzpicture}[scale=.5]
      \node[basev] (root) at (0, 0) {};
      \begin{scope}[every node/.style={vertex}]
      \node (a) at (0, 1.5) {};
      \node (b) at (-.75, 1.5) {};
      \node (c) at (.75, 1.5) {};
    \end{scope}
      \node at (-1., 1.5) {};
      \node at (1, 1.5) {};
      \draw (root) -- (a);
      \draw (b) -- (root) -- (c);
    \end{tikzpicture}
    \caption{$(0, 3)$}
  \end{subfigure}  
  \qquad
  \begin{subfigure}[b]{.15\linewidth}
    \centering
    \begin{tikzpicture}[scale=.5]
      \begin{scope}[every node/.style={basev}]
      \node (root) at (0, 0) {};
      \node (b) at (1., 0) {};
      \node (c) at (2, 0) {};
    \end{scope}
      \node[vertex] (a) at (0, 1.5) {};
    \draw (root) -- (a);
      \draw (root) -- (b) -- (c);
    \end{tikzpicture}
    \caption{$(0, 1)$}
  \end{subfigure}  
  \qquad
  \begin{subfigure}[b]{.15\linewidth}
    \centering
    \begin{tikzpicture}[every node/.style={basev}, scale=.5]
      \node (root) at (1,0) {};
      \node (b) at (0,0) {};
      \node (c) at (2,0) {};
      \draw (b) -- (root) -- (c);
    \end{tikzpicture}
    \caption{$(0, 0)$}
  \end{subfigure}  
  \caption{Three subgraphs characterizing small units. The square vertices are the core vertices, while the round vertices make the peripheral.  Note that a ($0, 3$)-unit cannot contain extra edges, while the other two can.}
  \label{fig:small-units}
\end{figure}

The following is straightforward from Exchange Rule~\ref{rule:single-unit} and the definitions of despotic and small units.
\begin{proposition}\label{pro:extra-edges}
  Let $U$ be an $(a,b)$-unit.  If Exchange Rule~\ref{rule:single-unit} is not applicable to $U$, then there is no edge between different twigs or leaves of $U$.
\end{proposition}
\begin{proof}
  This is very obvious for small units: It is by definition for $(0,3)$-units; and there are at most one leaf and no twig in $(0,1)$- and $(0,0)$-units.  Suppose for contradiction that there exists an edge $e$ between different twigs or leaves of an $(a,b)$-unit $U$.  By the definition of $(a,b)$-units, at least one end of $e$ comes from a twig, hence $a \ge 1$.   If $U$ is a $(1, 3)$-unit or contains at least seven vertices, then $U$ contains two vertex-disjoint $P_2$'s, contradicting the assumption.  Thus, $U$ has to be a $(2, 1)$-, $(2,0)$- or $(1,2)$-unit.  If $U$ is a $(2, 1)$-unit with an extra edge but does not contain two vertex-disjoint $P_2$'s, then it has to be a net, hence a net-unit; see Figure~\ref{fig:extra-edges}(a).  Likewise, if $U$ is a $(2, 0)$- or $(1,2)$-unit with an extra edge, then it is a democratic unit; see Figure~\ref{fig:extra-edges}(b,c).  Therefore, we always end with a contradiction, and this concludes the proof.
\end{proof}

\begin{figure}[h]
  \centering
  \begin{subfigure}[b]{.4\linewidth}
    \centering
    \begin{tikzpicture}[scale=.5]
      \node[basev] (root) at (1., 0) {};
      \begin{scope}[every node/.style={vertex}]
      \node (a) at (0,1.5) {};
      \node (a1) at (0,3) {};
      \node (b) at (1,1.5) {};
      \node (b1) at (1,3) {};
      \node (c) at (2,1.5) {};
      \end{scope}
      \draw (root) -- (a) -- (a1);
      \draw (root) -- (b) -- (b1);
      \draw (root) -- (c);
      \draw (a) -- (b);
    \end{tikzpicture}
    \begin{tikzpicture}[scale=.5]
      \node[basev] (root) at (1., 0) {};
      \begin{scope}[every node/.style={vertex}]
      \node (a) at (0,1.5) {};
      \node (a1) at (0,3) {};
      \node (b) at (1,1.5) {};
      \node (b1) at (1,3) {};
      \node (c) at (2,1.5) {};
      \end{scope}
      \draw (root) -- (a) -- (a1);
      \draw (root) -- (b) -- (b1);
      \draw (root) -- (c);
      \draw[thick] (c) -- (root) -- (a) (a1) -- (b1) -- (b);
    \end{tikzpicture}
    \begin{tikzpicture}[scale=.5]
      \node[basev] (root) at (1., 0) {};
      \begin{scope}[every node/.style={vertex}]
      \node (a) at (0,1.5) {};
      \node (a1) at (0,3) {};
      \node (b) at (1,1.5) {};
      \node (b1) at (1,3) {};
      \node (c) at (2,1.5) {};
      \end{scope}
      \draw (root) -- (a) -- (a1);
      \draw (root) -- (b) -- (b1);
      \draw (root) -- (c);		
      \draw[thick] (c) -- (root) -- (a) (a1) -- (b) -- (b1);
    \end{tikzpicture}
    \begin{tikzpicture}[scale=.5]
      \node[basev] (root) at (1., 0) {};
      \begin{scope}[every node/.style={vertex}]
      \node (a) at (0,1.5) {};
      \node (a1) at (0,3) {};
      \node (b) at (1,1.5) {};
      \node (b1) at (1,3) {};
      \node (c) at (2,1.5) {};
      \end{scope}
      \draw (root) -- (a) -- (a1);
      \draw (root) -- (b) -- (b1);
      \draw (root) -- (c);		
      \draw[thick] (root) -- (a) -- (a1) (c) -- (b) -- (b1);
    \end{tikzpicture}
    \begin{tikzpicture}[scale=.5]
      \node[basev] (root) at (1., 0) {};
      \begin{scope}[every node/.style={vertex}]
      \node (a) at (0,1.5) {};
      \node (a1) at (0,3) {};
      \node (b) at (1,1.5) {};
      \node (b1) at (1,3) {};
      \node (c) at (2,1.5) {};
      \end{scope}
      \draw (root) -- (a) -- (a1);
      \draw (root) -- (b) -- (b1);
      \draw (root) -- (c);		
      \draw[thick] (root) -- (a) -- (a1) (c) -- (b1) -- (b);
    \end{tikzpicture}
    \caption{$(2, 1)$}
  \end{subfigure}  
  \qquad
  \begin{subfigure}[b]{.2\linewidth}
    \centering
    \begin{tikzpicture}[scale=.5]
      \node[basev] (root) at (.75,0) {};
      \begin{scope}[every node/.style={vertex}]
      \node (a) at (0,1.5) {};
      \node (a1) at (0,3) {};
      \node (b) at (1.5, 1.5) {};
      \node (b1) at (1.5, 3) {};
      \end{scope}
      \draw (root) -- (a) -- (a1) -- (b1);
      \draw (root) -- (b) -- (b1);
    \end{tikzpicture}
    \begin{tikzpicture}[scale=.5]
      \node[basev] (root) at (.75,0) {};
      \begin{scope}[every node/.style={vertex}]
      \node (a) at (0,1.5) {};
      \node (a1) at (0,3) {};
      \node (b) at (1.5, 1.5) {};
      \node (b1) at (1.5, 3) {};
      \end{scope}
      \draw (root) -- (a) -- (a1) (a) -- (b);
      \draw (root) -- (b) -- (b1);
    \end{tikzpicture}
    \begin{tikzpicture}[scale=.5]
      \node[basev] (root) at (.75,0) {};
      \begin{scope}[every node/.style={vertex}]
      \node (a) at (0,1.5) {};
      \node (a1) at (0,3) {};
      \node (b) at (1.5, 1.5) {};
      \node (b1) at (1.5, 3) {};
      \end{scope}
      \draw (root) -- (a) -- (a1) -- (b);
      \draw (root) -- (b) -- (b1);
    \end{tikzpicture}
    \caption{$(2, 0)$}
  \end{subfigure}  
  \qquad
  \begin{subfigure}[b]{.15\linewidth}
    \centering
    \begin{tikzpicture}[scale=.5]
      \node[basev] (root) at (1., 0) {};
      \begin{scope}[every node/.style={vertex}]
      \node (a) at (0,1.5) {};
      \node (a1) at (0,3) {};
      \node (b) at (1,1.5) {};
      \node (c) at (2,1.5) {};
    \end{scope}
      \draw (root) -- (a) -- (a1);
      \draw (root) -- (b) -- (a1);
      \draw (root) -- (c);
    \end{tikzpicture}
    \begin{tikzpicture}[scale=.5]
      \node[basev] (root) at (1., 0) {};
      \begin{scope}[every node/.style={vertex}]
      \node (a) at (0,1.5) {};
      \node (a1) at (0,3) {};
      \node (b) at (1,1.5) {};
      \node (c) at (2,1.5) {};
    \end{scope}
      \draw (root) -- (a) -- (a1);
      \draw (root) -- (b) -- (a);
      \draw (root) -- (c);
    \end{tikzpicture}
    \caption{$(1, 2)$}
  \end{subfigure}  
  \caption{Demonstration that a $(2, 1)$-, $(2,0)$- or $(1,2)$-unit cannot have an extra edge between different twigs or leaves.  We either end with a democratic unit, or two disjoint $P_2$'s, shown as thick edges.  Note that there cannot be an edge between the two leaves of a $(1, 2)$-unit because it should otherwise have been classified as a $(2, 0)$-unit.}
  \label{fig:extra-edges}
\end{figure}

An exhaustive case analysis will convince us that we have covered all possible units on which Exchange Rule~\ref{rule:single-unit} is not applicable.   We should point out that a $(1,4)$-unit cannot be part of a unit partition produced as above; they can only be introduced by exchange rules to be presented in the next section.
\begin{proposition}\label{pro:raw-unit-partition}
  Let $U$ be a unit in a unit partition produced from a maximal $P_2$-packing $\cal P$.  If Exchange Rule~\ref{rule:single-unit} is not applicable to $U$, then it must be a democratic, despotic, or a small unit.  Moreover, it cannot be a $(1, 4)$-unit.
\end{proposition}
\begin{proof}
  Let $U$ be a unit and $P$ its base path.  We can find two vertex-disjoint $P_2$'s in the unit if (1) an edge component is matched to one end of $P$ and one component is matched to another vertex of $P$; or (2) two components are matched to one end of $P$ and one component is matched to another vertex of $P$. Disregarding these cases, if Exchange Rule~\ref{rule:single-unit} is not applicable, each unit falls into one of the cases in Table~\ref{tbl:matching}. We can check that (1) each unit on five or more vertices is a democratic/despotic unit, but not a $(1,4)$-unit; (2) other units are small units.
\end{proof}
\begin{table}[ht]
  \centering
  \caption{The configurations of units that contain exactly one $P_2$.  The three triangle vertices make the base path $P_2$, while others (round vertices) make components matched to them.  We only present the edges between a component and the vertex on the base path to which it is matched.
  We enumerate first units without edge components, then units with edge components; both of which are ordered by the number of vertices.} \label{tbl:matching}

\newcolumntype{C}{ >{\centering\arraybackslash} m{1.5cm} }
\newcolumntype{D}{ >{\centering\arraybackslash} m{5.5cm} }
  \scriptsize
  \begin{tabular}{>{\centering\arraybackslash}m{2.2cm} >{\centering\arraybackslash}m{2.2cm} >{\centering\arraybackslash}m{2.2cm} >{\centering\arraybackslash}m{2.2cm} >{\centering\arraybackslash}m{2.2cm} >{\centering\arraybackslash}m{2.2cm}}
    \toprule
    \begin{tikzpicture}
	\begin{scope}[scale=.5]
	\begin{scope}[every node/.style={pathv},]
	\node (a) at (0,0) {};
	\node (b) at (1.5,0) {};
	\node (c) at (3,0) {};
	\end{scope}
	\node at (1.5,1.5) {};
	\end{scope}
	\draw (a) -- (b) -- (c);
    \end{tikzpicture}
    &
    \begin{tikzpicture}
    \begin{scope}[every node/.style={pathv},scale=.5]
    \node (a) at (0,0) {};
    \node (b) at (1.5,0) {};
    \node (c) at (3,0) {};
    \end{scope}
    \begin{scope}[every node/.style={vertex},scale=.5]
    \node (a1) at (0,1.5) {};
    \end{scope}

    \draw (a1) -- (a) -- (b) -- (c);
    \end{tikzpicture}
    &
	\begin{tikzpicture}
	\begin{scope}[every node/.style={pathv},scale=.5]
	\node (a) at (0,0) {};
	\node (b) at (1.5,0) {};
	\node (c) at (3,0) {};
	\end{scope}
	\begin{scope}[every node/.style={vertex},scale=.5]
	\node (b1) at (1.5,1.5) {};
	\end{scope}
	
	\draw (a) -- (b) -- (c);
	\draw (b1) -- (b);
	\end{tikzpicture}
    &
	\begin{tikzpicture}
	\begin{scope}[every node/.style={pathv},scale=.5]
	\node (a) at (0,0) {};
	\node (b) at (1.5,0) {};
	\node (c) at (3,0) {};
	\end{scope}
	\begin{scope}[every node/.style={vertex},scale=.5]
	\node (a1) at (0,1.5) {};
	\node (b1) at (1.5,1.5) {};
	\end{scope}
	
	\draw (a1) -- (a) -- (b) -- (c);
	\draw (b1) -- (b);
	\end{tikzpicture}
    &
	\begin{tikzpicture}
	\begin{scope}[every node/.style={pathv},scale=.5]
	\node (a) at (0,0) {};
	\node (b) at (1.5,0) {};
	\node (c) at (3,0) {};
	\end{scope}
	\begin{scope}[every node/.style={vertex},scale=.5]
	\node (a1) at (0,1.5) {};
	\node (c1) at (3,1.5) {};
	\end{scope}
	
	\draw (a1) -- (a) -- (b) -- (c) -- (c1);
	\end{tikzpicture}
    &
    \begin{tikzpicture}
	\begin{scope}[every node/.style={pathv},scale=.5]
	\node (a) at (0,0) {};
	\node (b) at (1.5,0) {};
	\node (c) at (3,0) {};
	\end{scope}
	\begin{scope}[every node/.style={vertex},scale=.5]
	\node (a1) at (-0.5,1.5) {};
	\node (a2) at (0.5,1.5) {};
	\end{scope}
	\draw (a) -- (b) -- (c);
	\draw (a1) -- (a) -- (a2);
	\end{tikzpicture}
    \\
    $(0,0)$
    &
    $(0,1)$
    &
    $(0,3)$, $(0,1)$
    &
    pan, bull, $(2,0)$, $(1,2)$
    &
    pan, bull, $(2,0)$
    &
    {pan, bull, $(1,2)$}
    \\
	\midrule

    \begin{tikzpicture}
    \begin{scope}[every node/.style={pathv},scale=.5]
    \node (a) at (0,0) {};
    \node (b) at (1.5,0) {};
    \node (c) at (3,0) {};
    \end{scope}
    \begin{scope}[every node/.style={vertex},scale=.5]
    \node (b1) at (1,1.5) {};
    \node (b2) at (2,1.5) {};
    \end{scope}

    \draw (a) -- (b) -- (c);
    \draw (b1) -- (b) -- (b2);
    \end{tikzpicture}
    &
    \begin{tikzpicture}
    \begin{scope}[every node/.style={pathv},scale=.5]
    \node (a) at (0,0) {};
    \node (b) at (1.5,0) {};
    \node (c) at (3,0) {};
    \end{scope}
    \begin{scope}[every node/.style={vertex},scale=.5]
    \node (a1) at (0,1.5) {};
    \node (b1) at (1.5,1.5) {};
    \node (c1) at (3,1.5) {};
    \end{scope}

    \draw (a1) -- (a) -- (b) -- (c) -- (c1);
    \draw (b1) -- (b);
    \end{tikzpicture}
    &
    \begin{tikzpicture}
    \begin{scope}[every node/.style={pathv},scale=.5]
    \node (a) at (0,0) {};
    \node (b) at (1.5,0) {};
    \node (c) at (3,0) {};
    \end{scope}
    \begin{scope}[every node/.style={vertex},scale=.5]
    \node (a1) at (0,1.5) {};
    \node (b1) at (1,1.5) {};
    \node (b2) at (2,1.5) {};
    \end{scope}

    \draw (a1) -- (a) -- (b) -- (c);
    \draw (b1) -- (b) -- (b2);
    \end{tikzpicture}
    &
    \begin{tikzpicture}
    \begin{scope}[every node/.style={pathv},scale=.5]
    \node (a) at (0,0) {};
    \node (b) at (1.5,0) {};
    \node (c) at (3,0) {};
    \end{scope}
    \begin{scope}[every node/.style={vertex},scale=.5]
    \node (a1) at (0,1.5) {};
    \node (b1) at (1,1.5) {};
    \node (b2) at (2,1.5) {};
    \node (c1) at (3,1.5) {};
    \end{scope}

    \draw (a1) -- (a) -- (b) -- (c) -- (c1);
    \draw (b1) -- (b) -- (b2);
    \end{tikzpicture}
    &
    &
    \\
    pan, $(2,0)$, $(1,2)$, $(0,4)$
    &
    net, $(2,1)$
    &
    $(2,1)$, $(1,3)$
    &
    $(2,2)$
    &
    &
    \\
    \midrule

	\begin{tikzpicture}
	\begin{scope}[every node/.style={pathv},scale=.5]
	\node (a) at (0,0) {};
	\node (b) at (1.5,0) {};
	\node (c) at (3,0) {};
	\end{scope}
	\node () at (-0.2, 0) {};
	\begin{scope}[every node/.style={vertex},scale=.5]
	\node (a1) at (0,1.5) {};
	\node (a11) at (0,3) {};
	\end{scope}
	\draw (a) -- (b) -- (c);
	\draw (a11) -- (a1) -- (a);
	\end{tikzpicture}
	&
	\begin{tikzpicture}
	\begin{scope}[every node/.style={pathv},scale=.5]
	\node (a) at (0,0) {};
	\node (b) at (1.5,0) {};
	\node (c) at (3,0) {};
	\end{scope}
	\begin{scope}[every node/.style={vertex},scale=.5]
	\node (b1) at (1.5,1.5) {};
	\node (b2) at (1.5,3) {};
	\end{scope}
	
	\draw (a) -- (b) -- (c);
	\draw (b) -- (b1) -- (b2);
	\end{tikzpicture}
	&
    \begin{tikzpicture}
	\begin{scope}[every node/.style={pathv},scale=.5]
	\node (a) at (0,0) {};
	\node (b) at (1.5,0) {};
	\node (c) at (3,0) {};
	\end{scope}
	\begin{scope}[every node/.style={vertex},scale=.5]
	\node (a1) at (-0.5,1.5) {};
	\node (a11) at (-0.5,3) {};
	\node (a2) at (0.5,1.5) {};
	\end{scope}
	\draw (a) -- (b) -- (c);
	\draw (a11) -- (a1) -- (a) -- (a2);
	\end{tikzpicture}
    &
	\begin{tikzpicture}
	\begin{scope}[every node/.style={pathv},scale=.5]
	\node (a) at (0,0) {};
	\node (b) at (1.5,0) {};
	\node (c) at (3,0) {};
	\end{scope}
	\begin{scope}[every node/.style={vertex},scale=.5]
	\node (b1) at (1,1.5) {};
	\node (b11) at (1,3) {};
	\node (b2) at (2,1.5) {};
	\end{scope}
	
	\draw (a) -- (b) -- (c);
	\draw (b11) -- (b1) -- (b) -- (b2);
	\end{tikzpicture}
    &
	\begin{tikzpicture}
	\begin{scope}[every node/.style={pathv},scale=.5]
	\node (a) at (0,0) {};
	\node (b) at (1.5,0) {};
	\node (c) at (3,0) {};
	\end{scope}
	\begin{scope}[every node/.style={vertex},scale=.5]
	\node (a1) at (0,1.5) {};
	\node (b1) at (1.5,1.5) {};
	\node (b2) at (1.5,3) {};
	\end{scope}
	
	\draw (a1) -- (a) -- (b) -- (c);
	\draw (b) -- (b1) -- (b2);
	\end{tikzpicture}
    &
    \begin{tikzpicture}
	\begin{scope}[every node/.style={pathv},scale=.5]
	\node (a) at (0,0) {};
	\node (b) at (1.5,0) {};
	\node (c) at (3,0) {};
	\end{scope}
	\begin{scope}[every node/.style={vertex},scale=.5]
	\node (a1) at (0,1.5) {};
	\node (b1) at (1,1.5) {};
	\node (b11) at (1,3) {};
	\node (b2) at (2,1.5) {};
	\end{scope}
	
	\draw (a1) -- (a) -- (b) -- (c);
	\draw (b11) -- (b1) -- (b) -- (b2);
	\end{tikzpicture}
    \\
    pan, $C_5$, bull, $(2,0)$
    &
    pan, bull, $(2,0)$, $(1,2)$
    &
    net, $(2,1)$
    &
    $(2,1)$, $(1,3)$
    &
    net, $(2,1)$
    &
    $(3,0)$, $(2,2)$
    \\
    \midrule

    \begin{tikzpicture}
	\begin{scope}[every node/.style={pathv},scale=.5]
	\node (a) at (0,0) {};
	\node (b) at (1.5,0) {};
	\node (c) at (3,0) {};
	\end{scope}
	\begin{scope}[every node/.style={vertex},scale=.5]
	\node (a1) at (0,1.5) {};
	\node (b1) at (1.5,1.5) {};
	\node (b2) at (1.5,3) {};
	\node (c1) at (3,1.5) {};
	\end{scope}
	
	\draw (a1) -- (a) -- (b) -- (c) -- (c1);
	\draw (b) -- (b1) -- (b2);
	\end{tikzpicture}
    &
    \begin{tikzpicture}
	\begin{scope}[every node/.style={pathv},scale=.5]
	\node (a) at (0,0) {};
	\node (b) at (1.5,0) {};
	\node (c) at (3,0) {};
	\end{scope}
	\begin{scope}[every node/.style={vertex},scale=.5]
	\node (a1) at (-0.5,1.5) {};
	\node (a11) at (-0.5,3) {};
	\node (a2) at (0.5,1.5) {};
	\node (a21) at (0.5,3) {};
	\end{scope}
	
	\draw (a1) -- (a) -- (b) -- (c);
	\draw (a11) -- (a1) -- (a) -- (a2) -- (a21);
	\end{tikzpicture}
    &
	\begin{tikzpicture}
	\begin{scope}[every node/.style={pathv},scale=.5]
	\node (a) at (0,0) {};
	\node (b) at (1.5,0) {};
	\node (c) at (3,0) {};
	\end{scope}
	\begin{scope}[every node/.style={vertex},scale=.5]
	\node (b1) at (1,1.5) {};
	\node (b11) at (1,3) {};
	\node (b2) at (2,1.5) {};
	\node (b21) at (2,3) {};
	\end{scope}
	
	\draw (a) -- (b) -- (c);
	\draw (b11) -- (b1) -- (b) -- (b2) -- (b21);
	\end{tikzpicture}
    &
	\begin{tikzpicture}
	\begin{scope}[every node/.style={pathv},scale=.5]
	\node (a) at (0,0) {};
	\node (b) at (1.5,0) {};
	\node (c) at (3,0) {};
	\end{scope}
	\begin{scope}[every node/.style={vertex},scale=.5]
	\node (a1) at (0,1.5) {};
	\node (b1) at (1,1.5) {};
	\node (b11) at (1,3) {};
	\node (b2) at (2,1.5) {};
	\node (c1) at (3,1.5) {};
	\end{scope}
	
	\draw (a1) -- (a) -- (b) -- (c) -- (c1);
	\draw (b11) -- (b1) -- (b) -- (b2);
	\end{tikzpicture}
    &
    \begin{tikzpicture}
	\begin{scope}[every node/.style={pathv},scale=.5]
	\node (a) at (0,0) {};
	\node (b) at (1.5,0) {};
	\node (c) at (3,0) {};
	\end{scope}
	\begin{scope}[every node/.style={vertex},scale=.5]
	\node (a1) at (0,1.5) {};
	\node (b1) at (1,1.5) {};
	\node (b11) at (1,3) {};
	\node (b2) at (2,1.5) {};
	\node (b21) at (2,3) {};
	\end{scope}
	
	\draw (a1) -- (a) -- (b) -- (c);
	\draw (b11) -- (b1) -- (b) -- (b2) -- (b21);
	\end{tikzpicture}
    &
	\begin{tikzpicture}
	\begin{scope}[every node/.style={pathv},scale=.5]
	\node (a) at (0,0) {};
	\node (b) at (1.5,0) {};
	\node (c) at (3,0) {};
	\end{scope}
	\begin{scope}[every node/.style={vertex},scale=.5]
	\node (a1) at (0,1.5) {};
	\node (b1) at (1,1.5) {};
	\node (b11) at (1,3) {};
	\node (b2) at (2,1.5) {};
	\node (b21) at (2,3) {};
	\node (c1) at (3,1.5) {};
	\end{scope}
	
	\draw (a1) -- (a) -- (b) -- (c) -- (c1);
	\draw (b11) -- (b1) -- (b) -- (b2) -- (b21);
	\end{tikzpicture}
    \\
    $(3,0)$
    &
    $(3,0)$
    &
    $(3,0)$, $(2,2)$
    &
    $(3,1)$
    &
    $(3,1)$
    &
    $(4,0)$
    \\
    \bottomrule
  \end{tabular}
\end{table}

Recall that we always assign vertices of a component of $G - V({\cal P})$ to the same unit.  As a result, if an edge is crossing two units, then at least one end of this edge is on a base path.  However, we do not have a similar property on core vertices and peripheral vertices: There can be an edge between two peripheral vertices in different units, and an edge between a peripheral vertex and a democratic unit.  We now apply the following operations to partially restore it, whose correctness is straightforward: Note that (1) a despotic/small unit can sacrifice a twig or leaf; and (2) for every vertex of a democratic unit (except the nose of a bull-unit), we can pick an adjacent vertex such that the unit still contains a $P_2$ after removing the two vertices.
Since at most three units are involved, this rule can be applied in polynomial time.
\begin{exrule}\label{rule:two-units}
  If any of the following holds true, we produce a larger $P_2$-packing than $\cal P$.
  \begin{enumerate}[(i)]
  \item\label{item:rule-democratic} There is an edge between a vertex $u_1$ of a democratic unit and a vertex $u_2$ of another unit where $u_1$ is not a nose and $u_2$ is not a core vertex.
  \item There is an edge between the nose of a bull-unit and a twig of a despotic unit.
  \item There is an edge between a twig of a despotic unit and a peripheral vertex of another unit.
  \end{enumerate}
\end{exrule}

In Exchange Rule~\ref{rule:two-units}, conditions~(i) and (ii) are concerned with edges crossing democratic units, while (iii) with edges connecting peripheral vertices (of different units).
The following proposition characterizes all edges crossing different units when Exchange Rule~\ref{rule:two-units} is not applicable.
\begin{proposition}\label{pro:independent-unit-1}
  Let $\cal U$ be a unit partition on which neither of Exchange Rule \ref{rule:single-unit} and \ref{rule:two-units} is applicable, and let $u_1 u_2$ be an edge between two different units $U_1$ and $U_2$.  If neither $u_1$ nor $u_2$ is a core vertex, then for both $i = 1, 2$, the vertex $u_i$ is either a leaf or the nose of a bull-unit.
\end{proposition}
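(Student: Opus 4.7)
The plan is a direct case analysis on the type of unit containing each endpoint, ruling out every configuration other than a leaf or the nose of a bull-unit by matching it to an applicable clause of Exchange Rule~\ref{rule:two-units}.

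First I would fix $i \in \{1,2\}$ and split on whether $U_i$ is democratic or despotic/small, using that $u_i$ is not a core vertex. If $U_i$ is democratic, then $u_i$ is either a nose (possible only in a bull-unit) or not; in the latter case $u_i$ is a non-nose vertex of a democratic unit whose neighbor $u_{3-i}$ is also not a core vertex, so clause~(i) of Exchange Rule~\ref{rule:two-units} would fire, a contradiction. Hence a democratic $U_i$ forces $u_i$ to be the nose of a bull-unit, which is one of the allowed outcomes.

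Next suppose $U_i$ is despotic or small. Since $u_i$ is non-core and peripheral, it is either a twig vertex or a leaf. For the small units $(0,3)$, $(0,1)$, $(0,0)$ the peripheral contains only leaves (as $d_1 = 0$ in each case), so nothing further is needed there. It remains to exclude $u_i$ from being a twig vertex of a despotic unit. Here I would sub-split on the type of $U_{3-i}$: if $U_{3-i}$ is despotic or small then $u_{3-i}$ is peripheral (as it is non-core), and clause~(iii) of Exchange Rule~\ref{rule:two-units} applies to the twig-peripheral edge; if $U_{3-i}$ is democratic, then by the previous paragraph (applied to $u_{3-i}$) the vertex $u_{3-i}$ must be a nose of a bull-unit, and clause~(ii) applies. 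Either way we contradict non-applicability of Exchange Rule~\ref{rule:two-units}, so $u_i$ must be a leaf.

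Combining the two paragraphs yields the claim for both $i=1$ and $i=2$ independently. The only place where any delicacy arises is making sure the sub-case in which $U_{3-i}$ is democratic is handled without circularity; I would phrase it by first establishing the intermediate statement ``if $u$ is a non-core endpoint in a democratic unit of an inter-unit edge, then $u$ is the nose of a bull-unit'' as a stand-alone observation (using only Rule~\ref{rule:two-units}(i)), and then applying it to $u_{3-i}$ before invoking Rule~\ref{rule:two-units}(ii). No appeal to clause~(iv) is needed, since that rule concerns triples of leaves rather than a single edge.
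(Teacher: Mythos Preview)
Your proposal is correct and follows essentially the same case analysis as the paper's proof: rule out non-nose democratic vertices via clause~(i), then rule out twig vertices via clauses~(ii)--(iii) depending on the nature of the other endpoint. The paper's version is terser---it simply asserts that when $u_1$ is a twig vertex and $u_2$ is non-core, ``one of Exchange Rule~\ref{rule:two-units}(i)--(iii) happens''---whereas you spell out the sub-split on $U_{3-i}$ explicitly and address the potential circularity; but the underlying logic is identical.
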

\begin{proof}
  Suppose for contradiction that $u_1$ is neither a leaf nor a nose.
  We consider the type of $U_1$.  Note that  $u_2$ is not a core of $U_2$.
  If $U_1$ is a democratic unit, then $u_1$ is a non-nose vertex, and condition (i) of  Exchange Rule~\ref{rule:two-units}  holds true.  Otherwise, $U_1$ is a despotic unit, of which $u_1$ is a vertex of a twig, and then one of conditions (i)--(iii) of  Exchange Rule~\ref{rule:two-units} holds true.
\end{proof}

We introduce two more exchange rules to deal with crossing edges between noses of bull-units and leaves of despotic/small units, or edges between two leaves of different units.  One tries to transform a bull-unit to a net-unit, while the other tries to consolidate leaves to make a twig.

\begin{exrule}\label{rule:consolidate-1}
  If there is an edge between the nose $u_1$ of a bull-unit $U_1$ and a leaf $u_2$ of a despotic/small unit $U_2$, move $u_2$ from $U_2$ to $U_1$.
\end{exrule}

\begin{exrule}\label{rule:consolidate-2}
  Let $u_1u_2$ be an edge between two leaves from two units $U_1$ and $U_2$ with $|U_1| \ge |U_2|$.
  \begin{itemize}
  \item If $U_2$ is not a $(1, 4)$-unit, move $u_1$ from $U_1$ to $U_2$.
  \item If $U_2$ is a $(1, 4)$-unit, move $u_2$ from $U_2$ to $U_1$.
  \end{itemize}  
\end{exrule}

Note that Exchange Rule~\ref{rule:consolidate-2} is only well defined when there exists no edges between two leaves of two $(1, 4)$-units; otherwise it would end with a unit with an undefined type.  We say that a unit partition is \emph{general} if (1) each unit in it is one of the types represented by the graphs in Figures~\ref{fig:unit1}--\ref{fig:small-units}; and (2) there is no edge connecting leaves from two different $(1,4)$-units.
We need to make sure that the unit partition remains general after each application of the exchange rules.

\begin{proposition}\label{pro:14unit}
  If $\cal U$ is a general unit partition, then after applying Exchange Rule~\ref{rule:consolidate-1} or \ref{rule:consolidate-2}, we end with a unit with two vertex-disjoint $P_2$'s, or a general unit partition.
\end{proposition}
\begin{proof}
  The proposition is obvious for Exchange Rule~\ref{rule:consolidate-1}: removing a leaf from any despotic/small unit leaves a defined unit.
  Now we consider Exchange Rule~\ref{rule:consolidate-2}.  Note that $U_2$ is not a $(1,4)$-unit, and removing a leaf from it leaves a defined unit.
  On the one hand, $U_1$ becomes a $(d_1(U_1), d_2(U_1)-1)$-unit when $|U_1| \ge 6$, or a small unit otherwise.  On the other hand, if $U_2$ is a $(0,1)$-unit or $(0,3)$-unit, then it becomes a pan-, $C_5$-, bull-, or $(2,0)$-unit; otherwise, it becomes a $(d_1(U_2)+1, d_2(U_2)-1)$-unit.  It is always a defined unit and not a $(1,4)$-unit.
  Since no new $(1,4)$-units are introduced in this step, and since the edges between other units are not impacted, condition (2) of the definition of general unit partition remains satisfied.
  Therefore, the new unit partition is general.
\end{proof}

We conclude the preparation phase by introducing reduced unit partitions. A {\em reduced unit partition} is a general unit partition, on which neither Reduction Rule~\ref{rul:component} nor Exchange Rules \ref{rule:single-unit}--\ref{rule:consolidate-2} are applicable.  The following lemma summarizes the properties of reduced unit partitions.

\begin{lemma}\label{lem:peripheral}
Let $u v$ be an edge crossing two units in a reduced unit partition.  Either both $u$ and $v$ are noses of bull-units, or one of them is a core vertex of some despotic or small unit.
\end{lemma}
\begin{proof}
  Note that a vertex is either a core vertex, or a peripheral vertex, or a vertex in a democratic unit.  If both $u$ and $v$ are peripheral vertices, then condition~(iii) of Exchange Rule~\ref{rule:two-units} and Exchange Rule~\ref{rule:consolidate-2} are applicable.  On the other hand, an edge between a peripheral vertex and a democratic unit would trigger Exchange Rule~\ref{rule:two-units}(i)--(ii) or Exchange Rule~\ref{rule:consolidate-1}.  We are thus left with edges between two democratic units or with one end as a core vertex.  Since Exchange Rule~\ref{rule:two-units}(i) is not applicable, when both $u$ and $v$ are from democratic units, they have to be noses of bull-units.
\end{proof}

\section{Main rules}
This section presents our main exchange rules on the reduced unit partition.  The first two try to move twigs and leaves from a larger unit to a smaller unit through a sequence of intermediary units.
Their purpose is to eliminate all the units with more than six vertices.
In particular, one deals with $(4,0)$-, $(3,1)$-, $(3,0)$- and $(2,2)$-units; while the other deals with $(1,4)$-, $(1,3)$- and $(0,4)$-units.
Note that after applying these two rules, one or more of Exchange Rules~\ref{rule:single-unit}--\ref{rule:consolidate-2} may become applicable again. The last exchange rule and the last reduction rule are concerned with units with six vertices; if there are too many of them, we can find either a larger $P_2$-packing than $\cal P$ or a reducible set.

We start with those units $U$ with $d_1(U) \ge 2$, and the idea is to ``cut'' a twig from such a unit and ``graft'' it onto a smaller unit.  See Figure~\ref{fig:reduce-twig} for illustration.
We say that $t_1 v_2 t_2 \cdots v_\ell$ is a \emph{twig-alternating path} if (1) $v_i$ is a core vertex of $U_i$ for $2 \le i \le \ell$; (2) $t_j$ is a twig of $U_j$ adjacent to $v_{j +1}$ for $1 \le j \le \ell-1$; and (3) $U_i\ne U_j$ when $i \ne j$.
Note that if $d_1(U_1) > 2$, then $U_1$ is a $(4,0)$-, $(3,1)$-, or $(3,0)$-unit.
\begin{exrule}\label{rule:twig}
  Let $\cal U$ be a reduced unit partition.  If there is a twig-alternating path $t_1 v_2 t_2 \cdots v_\ell$ such that
  \begin{enumerate}[(i)]
  \item  $d_1(U_1) > 2$ or $U_1$ is a $(2, 2)$-unit; and
  \item $d_1(U_\ell) = 0$,
  \end{enumerate}
  then for $i = 1, \ldots, \ell - 1$, move twig $t_i$ from unit $U_i$ to unit $U_{i +1}$.
\end{exrule}

\begin{figure}[h]
  \centering
  \begin{tikzpicture}[scale=.5]
    \begin{scope}[shift={(0,0)}]
      \begin{scope}[every node/.style={vertex}]
        \node (a1) at (-1.5, 3) {};
	\node (b1) at (-1.5, 2) {};
	\node (a2) at (-.5, 3) {};
	\node (b2) at (-.5, 2) {};
	\node (a3) at (.5, 3) {};
	\node (b3) at (.5, 2) {};
	\node (b4) at (1.5, 2) {};
      \end{scope}
      \begin{scope}[every node/.style={basev}]
	\node (u) at (0, 0) {};
      \end{scope}
      \draw (a1) -- (b1) -- (u);
      \draw (a2) -- (b2) -- (u);
      \draw[blue, very thick] (a3) -- (b3) -- (u);
      \draw (b4) -- (u);
      \node at (0, -1) {$U_1$};
    \end{scope}

    \begin{scope}[shift={(4,0)}]
      \begin{scope}[every node/.style={vertex}]
	\node (c1) at (-.75, 3) {};
	\node (d1) at (-.75, 2) {};
	\node (c2) at (.75, 3) {};
	\node (d2) at (.75, 2) {};
      \end{scope}
      \begin{scope}[every node/.style={basev}]
	\node (v) at (0, 0) {};
      \end{scope}
      \draw (c1) -- (d1) -- (v);
      \draw (c2) -- (d2) -- (v);
      \node at (0, -1) {$U_2$};
    \end{scope}

    \begin{scope}[shift={(7,0)}]
      \begin{scope}[every node/.style={vertex}]
	\node (e1) at (-.75, 3) {};
	\node (e2) at (-.75, 2) {};
	\node (e3) at (0, 2) {};
	\node (e4) at (.75, 2) {};
      \end{scope}
      \begin{scope}[every node/.style={basev}]
	\node (w) at (0, 0) {};
      \end{scope}
      \draw (e1) -- (e2) -- (w);
      \draw (e3) -- (w) -- (e4);
      \node at (0, -1) {$U_3$};
    \end{scope}

    \begin{scope}[shift={(10, 0)}]
      \begin{scope}[every node/.style={vertex}]
	\node (f1) at (-.75, 2) {};
	\node (f2) at (0, 2) {};
	\node (f3) at (.75, 2) {};
      \end{scope}
      \begin{scope}[every node/.style={basev}]
	\node (x) at (0, 0) {};
      \end{scope}
      \draw (f1) -- (x);
      \draw (f2) -- (x) -- (f3);
      \node at (0, -1) {$U_4$};
    \end{scope}
    \draw[dashed, blue, thick] (b3) edge (v) (c1) edge (w) (e2) edge (x);
    \node at (13,  1.5) {$\Rightarrow$};
  \end{tikzpicture}
  \qquad
  \begin{tikzpicture}[scale=.5]
    \begin{scope}[shift={(0,0)}]
      \begin{scope}[every node/.style={vertex}]
	\node (a1) at (-1.5, 3) {};
	\node (b1) at (-1.5, 2) {};
	\node (a2) at (-.5, 3) {};
	\node (b2) at (-.5, 2) {};
	\node (b4) at (1.5, 2) {};
      \end{scope}
      \begin{scope}[every node/.style={basev}]
	\node (u) at (0, 0) {};
      \end{scope}
      \draw (a1) -- (b1) -- (u);
      \draw (a2) -- (b2) -- (u);
      \draw (b4) -- (u);
      \node at (0, -1) {$U'_1$};
    \end{scope}

    \begin{scope}[shift={(4,0)}]
      \begin{scope}[every node/.style={vertex}]
	\node (c1) at (-.75, 3) {};
	\node (d1) at (-.75, 2) {};
	\node (c2) at (.75, 3) {};
	\node (d2) at (.75, 2) {};
      \end{scope}
      \begin{scope}[every node/.style={basev}]
	\node (v) at (0, 0) {};
      \end{scope}
      \draw (c1) -- (d1) -- (v);
      \draw (c2) -- (d2) -- (v);
      \node at (0, -1) {$U'_2$};
    \end{scope}

    \begin{scope}[shift={(7,0)}]
      \begin{scope}[every node/.style={vertex}]
	\node (e1) at (-.75, 3) {};
	\node (e2) at (-.75, 2) {};
	\node (e3) at (0, 2) {};
	\node (e4) at (.75, 2) {};
      \end{scope}
      \begin{scope}[every node/.style={basev}]
	\node (w) at (0, 0) {};
      \end{scope}
      \draw (e1) -- (e2) -- (w);
      \draw (e3) -- (w) -- (e4);
      \node at (0, -1) {$U'_3$};
    \end{scope}

    \begin{scope}[shift={(10, 0)}]
      \begin{scope}[every node/.style={vertex}]
	\node (f4) at (-1.5, 3) {};
	\node (f5) at (-1.5, 2) {};
	\node (f1) at (-.5, 2) {};
	\node (f2) at (.5, 2) {};
	\node (f3) at (1.5, 2) {};
      \end{scope}
      \begin{scope}[every node/.style={basev}]
	\node (x) at (0, 0) {};
      \end{scope}
      \draw[blue, very thick] (f4) -- (f5) -- (x);
      \draw (f1) -- (x);
      \draw (f2) -- (x) -- (f3);
      \node at (0, -1) {$U'_4$};
    \end{scope}
    \draw[dashed, blue, thick] (d1) edge (u)  (v) edge (e1) (w) edge (f5);
  \end{tikzpicture}
  \caption{Illustration for Exchange Rule~\ref{rule:twig}.  The left is the original graph and the right the graph after applying the exchange rule.  The first unit loses a twig (denoted by two thick edges in the left), the last unit gains a twig (denoted by two thick edges in the right), while the type of all the intermediary units remain unchanged.  In a sense, after a chain of operations, ``a twig is moved from the first unit to the last one.''}
  \label{fig:reduce-twig}
\end{figure}

Let $U'_\ell$ be the new unit obtained by adding $t_{\ell - 1}$ to $U_\ell$. Note that if $U_\ell$ is a $(0, 4)$-unit, then $U'_\ell$ will be a $(1, 4)$-unit.  Indeed, Exchange Rule~\ref{rule:twig} is the only rule that can introduce $(1,4)$-units.
After applying Exchange Rule~\ref{rule:twig}, we may need to apply Exchange Rule~\ref{rule:consolidate-1} or \ref{rule:consolidate-2}, so we need to make sure that the resulting unit partition is still general.
\begin{proposition}\label{pro:twig}
  After applying Exchange Rule~\ref{rule:twig} for a twig-alternating path $P = t_1 v_2 t_2 \cdots x_\ell$, the following hold.
  \begin{enumerate} [(i)]
  \item $U_1$ becomes a $(d_1(U_1)-1, d_2(U_1))$-unit.
  \item For $2 \le i \le \ell -1$, both $d_1(U_i)$ and $d_2(U_i)$ remain unchanged.
  \item If $|U'_\ell| > 6$ and $d_2(U_\ell) \ge 2$, then $d_1(U'_\ell) = 1$ and $d_2(U'_\ell) = d_2(U_{\ell})$. 
  \item Either $U'_\ell$ contains two vertex-disjoint $P_2$'s, or the resulting unit partition is still general.
  \end{enumerate}
\end{proposition}
\begin{proof}
  The first three assertions are immediate from the rule.  Note that if $U_\ell$ was a $(0, 1)$-unit, then it may happen that the new unit contains two vertex-disjoint $P_2$'s.  We now argue that if this does not happen, then the resulting unit partition is still general.  Note that $U_\ell$ is $(0,4)$-, $(0,3)$-, $(0,1)$- or $(0,0)$-unit.  All the $\ell$ units are still well defined, in particular, $U'_\ell$ is a democratic unit or a despotic unit of type $(1, 2)$, $(1, 3)$, $(1, 4)$, $(2, 0)$, or $(2, 1)$.  In the case that it is a $(1, 4)$-unit, $U_\ell$ was a $(0, 4)$-unit.  Because the unit partition was reduced before we apply Exchange Rule~\ref{rule:twig}, there cannot be an edge between leaves of $U_\ell$ and leaves from another $(1, 4)$-unit by Lemma~\ref{lem:peripheral}.  Hence the unit partition remains general.
\end{proof}

If some unit has at least six vertices and at least two twigs, but we cannot move any twig away from it with Exchange Rule~\ref{rule:twig}, then we can find a reducible set.  For this purpose, we build an auxiliary bipartite graph: 
\begin{itemize}
\item for each core vertex, introduce a node into $L$;
	
\item for each twig, introduce a node into $R$; and
	
\item add an edge between a node $x \in L$ and a node $y \in R$ if the core vertex represented by $x$ is adjacent to the twig represented by $y$.
\end{itemize}

\begin{lemma}\label{lem:twig-crown}
  Let $\cal U$ be a reduced unit partition.  If there exists a unit $U$ in $\cal U$ with $d_1(U) = d_2(U) = 2$ or $d_1(U) > 2$, but Exchange Rule~\ref{rule:twig} is not applicable, then we can find a reducible set in polynomial time.
\end{lemma}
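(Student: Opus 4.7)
The plan is to translate the non-applicability of Exchange Rule~\ref{rule:twig} into a reachability ``dead end'' in an auxiliary digraph on units, and then convert that dead end into a reducible set of type~(i). Define a digraph $D$ with one node per despotic or small unit and an arc $W \to W'$ (for $W \neq W'$) whenever some twig of $W$ is adjacent in $G$ to some core of $W'$. A simple directed path $W_1 \to W_2 \to \cdots \to W_\ell$ in $D$ induces a twig-alternating path $y_1 x_2 y_2 \cdots x_\ell$ by setting $x_i = c_{W_i}$ and letting $y_i$ be any twig of $W_i$ witnessing the arc $W_i \to W_{i+1}$; because the $W_i$ are pairwise distinct, $x_i$ and $y_j$ belong to the same unit if and only if $W_i = W_j$ if and only if $i = j$, so condition~(2) of the definition of a twig-alternating path holds for \emph{all} index pairs rather than merely consecutive ones.

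Compute, via BFS in $D$, the set $\mathcal{S}$ of units reachable from $U$; this takes polynomial time. Every $V \in \mathcal{S}$ must satisfy $d_1(V) \ge 1$, for otherwise a simple path from $U$ to $V$ in $D$ would yield a twig-alternating path triggering Exchange Rule~\ref{rule:twig}: the source $U$ is a $(2,2)$-unit or has $d_1(U) > 2$ by hypothesis, and $V$ is a receiver, contradicting the assumed non-applicability. Hence every $W \in \mathcal{S}$ is despotic with $d_1(W) \ge 1$, carrying a unique core $c_W$ and at least one twig.

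For each $W \in \mathcal{S}$ pick an arbitrary twig $y_W$, and let $C$ be the set of the $2|\mathcal{S}|$ vertices that appear in some $y_W$. By Lemma~\ref{lem:peripheral}(i) the chosen twigs are pairwise vertex-disjoint and carry no $G$-edges between different ones, so $G[C]$ consists of exactly $|\mathcal{S}|$ edge components and has maximum degree one. Lemma~\ref{lem:peripheral}(ii) forces every $G$-neighbor of $C$ to be a core, and the closure of $\mathcal{S}$ under out-arcs of $D$ — any adjacency $y_W c_V$ with $V \notin \mathcal{S}$ would give the arc $W \to V$ and hence place $V \in \mathcal{S}$, a contradiction — confines those cores to $\{c_W : W \in \mathcal{S}\}$. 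Since each $y_W$ is adjacent to $c_W$, the equality $N_G(C) = \{c_W : W \in \mathcal{S}\}$ holds and $|N_G(C)| = |\mathcal{S}|$ matches the number of edge components, so the bijection $c_W \leftrightarrow y_W$ certifies $C$ as a reducible set under definition~(i).

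The delicate point worth verifying carefully is step one: that simple-path distinctness in $D$ really yields the \emph{full} ``if and only if'' condition of a twig-alternating path (for every pair $(i,j)$, not merely consecutive ones), and that the closure argument does not leak adjacencies of a twig into a democratic unit or into a core outside $\mathcal{S}$. Both are handled by the definition of $D$, whose node set excludes democratic units because they carry neither cores nor twigs, together with Lemma~\ref{lem:peripheral}(ii); once this is nailed down, the remainder is a routine matching-style check.
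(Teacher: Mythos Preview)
Your argument is correct and follows essentially the same route as the paper: both proofs collect the units/twigs reachable from $U$ via twig-alternating paths, use the non-applicability of Exchange Rule~\ref{rule:twig} to conclude every reached unit has $d_1 \ge 1$, and then exhibit a type-(i) reducible set by pairing each reached core with a twig of its own unit. The only cosmetic differences are that the paper works directly in $B_2$ and takes \emph{all} reachable twigs as $C$, whereas you abstract to a digraph on units and select one twig per reached unit; both choices satisfy the definition of a reducible set. One tiny imprecision: when you write $x_\ell = c_{W_\ell}$ in the contradiction step, the terminal unit $W_\ell$ (having $d_1 = 0$) may be a $(0,1)$- or $(0,0)$-unit with three core vertices, so $x_\ell$ should be the specific core witnessing the last arc rather than ``the'' core --- but this does not affect the argument.
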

\begin{proof}
  We take all the twig-alternating paths with $t_1$, the first twig, from $U$, and let $C$ denote the set of vertices in the twigs in them.  By Proposition~\ref{pro:extra-edges} and Lemma~\ref{lem:peripheral}, $G[C]$ consists of edge components and all vertices in $N(C)$ are core vertices.  Since Exchange Rule~\ref{rule:twig} is not applicable, for each core vertex in $N(C)$, the unit containing it has at least one twig.  Hence $C$ is a reducible set.
\end{proof}
	
We next migrate leaves in a similar way as Exchange Rule~\ref{rule:twig}.  We say that $t_1 v_2 t_2 \cdots v_\ell$ is a \emph{leaf-alternating path} if (1) $v_i$ is the core vertex of $U_i$ for $2 \le i \le \ell$; (2) $t_j$ is a leaf of $U_j$ adjacent to $v_{j + 1}$ for $1 \le j \le \ell-1$; and (3) $U_i\ne U_j$ when $i \ne j$.
\begin{exrule}\label{rule:leaf}
  In a reduced unit partition without $(4,0)$-, $(3,1)$-, $(3,0)$-, or $(2,2)$-units, if there is a leaf-alternating path $t_1 v_2 t_2 \cdots v_\ell$ where
  \begin{enumerate} [(i)]
  \item $U_1$ is not a $(0, 3)$-unit and $d_2(U_1) \ge 3$; and
  \item $d_2(U_\ell) \le 1$.
  \end{enumerate}
  then for $i = 1, \ldots, \ell - 1$, move leaf $t_i$ from unit $U_i$ to unit $U_{i +1}$.
\end{exrule}

Let $U'_\ell$ be the new unit obtained by adding $t_{\ell - 1}$ to $U_\ell$.
Again, we need to make sure that the resulting unit partition is still general.

\begin{proposition}\label{pro:leaf}
  After applying Exchange Rule~\ref{rule:leaf} for a leaf-alternating path $P = t_1 v_2 t_2 \cdots v_\ell$, the following hold.
  \begin{enumerate} [(i)]
  \item $U_1$ becomes a $(d_1(U_1), d_2(U_1)-1)$-unit.
  \item For $2 \le i \le \ell -1$, both $d_1(U_i)$ and $d_2(U_i)$ remain unchanged.
  \item If $|U_\ell| > 4$, then $d_1(U'_\ell) = d_1(U_\ell)$ and $d_2(U'_\ell) = d_2(U_{\ell}) + 1$.
  \item The resulting unit partition is still general.
  \end{enumerate}
\end{proposition}
\begin{proof}
The first two assertions are immediate from the rule.
Note that Exchange Rule~\ref{rule:leaf} is only applied on a unit partition with $d_1(U)\le 3$ for all units $U$.  If $|U_\ell| > 4$, then $U_\ell$ has to be a $(2, 1)$- or $(2, 0)$-unit, and $U'_\ell$ is a $(2, 2)$- or $(2, 1)$-unit respectively.  We are then left with the case $U_\ell$ being a small unit, i.e., $|U_\ell|\le 4$.  This proves assertion (iii).
Note that Exchange Rule~\ref{rule:leaf} may introduce new edges between leaves, by making core vertices of $(0,1)$- or $(0,0)$-units peripheral.
If $U_\ell$ is a $(0,0)$-unit, then $U'_\ell$ is a small unit on four vertices.
If $U_\ell$ is a $(0,1)$-unit, then $U'_\ell$ may be a $(2, 0)$-, $(1, 2)$-, pan-, or bull-unit.
Noting that no $(1, 4)$-unit is produced or impacted during this operation, we can conclude that the resulting unit partition remains general.  This concludes the proof.
\end{proof}

A lemma similar to Lemma~\ref{lem:twig-crown} holds for Exchange Rule~\ref{rule:leaf}.  We build an auxiliary bipartite graph: 
\begin{itemize}
\item for each core vertex, introduce a node into $L$;
\item for each leaf, introduce a node into $R$; and
\item add an edge between a node $x \in L$ and a node $y \in R$ if the core vertex represented by $x$ is adjacent to the leaf represented by $y$.
\end{itemize}

\begin{lemma}\label{lem:leaf-crown}
  Let $\cal U$ be a reduced unit partition without $(4,0)$-, $(3,1)$-, or $(3,0)$-units.  If there exists in $\cal U$ a unit $U$ with (i) $d_2(U) > 3$, or (ii) $d_1(U) > 0$ and $d_2(U) = 3$, but Exchange Rule~\ref{rule:leaf} is not applicable, then we can find a reducible set in polynomial time.
\end{lemma}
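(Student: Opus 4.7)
The plan is to imitate the proof of Lemma~\ref{lem:twig-crown}, with the twist that leaves are single vertices rather than edge components, so we will produce a reducible set satisfying condition~(ii) of the definition---two single-vertex components per core vertex---instead of condition~(i). I would let $\mathcal{L}$ consist of every leaf that either lies in $U$ or lies in a unit whose core node in $L$ is reachable via a leaf-alternating path in $B_3$ from a leaf-node of $U$. A breadth-first search on $B_3$ computes $\mathcal{L}$, the corresponding vertex set $C$, and $A = N(C)$ in polynomial time. By Lemma~\ref{lem:peripheral}(i) each leaf is an isolated component of $G[C]$, and by Lemma~\ref{lem:peripheral}(ii) every vertex of $A$ is a core vertex.

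The key step is the claim that for every $v \in A$, the unit $V$ containing $v$ satisfies $d_2(V) \ge 2$ and every leaf of $V$ lies in $\mathcal{L}$. Once this holds, retaining exactly two leaves from the unit of each $v_i \in A$ produces a set $C' \subseteq C$ with $2|A|$ single-vertex components, $N(C') = A$, and each $v_i$ adjacent to its own two leaves; by the definition, $C'$ is a reducible set, and Lemma~\ref{lem:p2-crown} then gives safeness.

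Proving the claim is the main obstacle. If $V = U$ it is immediate from $d_2(U) \ge 3$ and the construction of $\mathcal{L}$. Otherwise I must show that $v$ is itself reachable as the terminal core of some leaf-alternating path from a leaf of $U$. Since $v \in A$, $v$ has a neighbor $y \in \mathcal{L}$. If $y \in V$, then $v$, being the unique core of $y$'s unit, is precisely the terminal of the path that witnesses $y \in \mathcal{L}$. If $y$ lies in a different unit $Y$ (where $Y = U$ or $Y$'s core is reachable), I extend a witnessing path ending at the core of $Y$ by the pair $y, v$. The delicate point is to verify that this extension remains a valid leaf-alternating path: $v$ must not already appear earlier (but if it did, $v$ was already reachable), and no earlier $y_j$ can lie in $V$ (which would force $x_j = v$, contradicting simplicity). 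Once $v$ is reachable, non-applicability of Exchange Rule~\ref{rule:leaf} with $U_1 = U$ and $U_2 = V$ forces $d_2(V) \ge 2$; the hypothesis that the partition avoids $(4,0)$-, $(3,1)$-, and $(3,0)$-units guarantees that $V$'s type is within the regime of the rule and so cannot escape the conclusion. Finally, the definition of $\mathcal{L}$ yields that every leaf of $V$ lies in $\mathcal{L}$, completing the claim.
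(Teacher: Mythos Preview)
Your proof is essentially the paper's: collect all leaves reachable via leaf-alternating paths from the leaves of $U$, use Lemma~\ref{lem:peripheral} to see that this set is independent with a neighbourhood consisting only of core vertices, and then deduce from the non-applicability of Exchange Rule~\ref{rule:leaf} that every core in $A$ lies in a unit contributing at least two of those leaves. Two cosmetic points: in your Case~1 the phrase ``the unique core of $y$'s unit'' has not yet excluded the possibility that $V$ is a $(0,1)$-unit (which has three cores), but the fix is immediate---the $x_j$ paired with $y$ on the witnessing path is then a reachable core of a unit with $d_2\le 1$, yielding the contradiction directly; and your passage to the subset $C'$ is unnecessary, since condition~(ii) of the reducible-set definition only asks that $2|A|$ suitable components \emph{exist} in $G[C]$, so $C$ itself already qualifies.
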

\begin{proof}

	We take all the leaf-alternating paths with $t_1$, the first leaf, from $U$, and let $C$ denote the set of leaves in them.  By Proposition~\ref{pro:extra-edges} and Lemmas~\ref{lem:peripheral}, $G[C]$ consists of isolated vertices and all vertices in $N(C)$ are core vertices.  Since Exchange Rule~\ref{rule:leaf} is not applicable, for each core vertex in $N(C)$, the unit containing it has at least two leaves.  Hence $C$ is a reducible set.
\end{proof}

After application of Exchange Rule~\ref{rule:twig} or \ref{rule:leaf}, a core vertex of a $(0, 1)$-unit or $(0, 0)$-unit may become a peripheral vertex, thereby turning a reduced unit partition back to a general unit partition.  Therefore we may need to reapply Exchange Rules~\ref{rule:single-unit}--\ref{rule:consolidate-2}.   Moreover, Exchange Rule~\ref{rule:leaf} may turn a $(2,1)$-unit into a $(2,2)$-unit, and hence trigger the application of Exchange Rule~\ref{rule:twig}.  We are able to show that these two exchange rules can be applied a polynomial number of times; we defer the issue of the running time to the next section.

If neither Reduction Rule~\ref{rul:component} nor Exchange Rules~\ref{rule:single-unit}--\ref{rule:leaf} is applicable, then the only despotic units that can exist in a general unit partition are $(2,1)$-, $(2,0)$- and $(1,2)$-units.  Therefore, of all units, only net-units and $(2, 1)$-units have more than five vertices.  We now try to bound the number of these two type of units by the number of other smaller units.  We use $s_{\mathrm{net}}$, $s_{\mathrm{pan}}$, $s_{C_5}$ and $s_{\mathrm{bull}}$ to denote the number of, respectively, net-, pan-, $C_5$- and bull-units, and use $s_{a, b}$ to denote the number of $(a, b)$-units.

We construct an auxiliary bipartite graph $B_4$:
\begin{itemize}
\item for each net-unit and each twig of a despotic unit, add a node into $L$;
\item for each vertex not in any net-unit or twig, add a node into $R$; and
\item for two nodes $x\in L$ and $y \in R$, add an edge $xy$ if the vertex represented by $y$ is adjacent to the net-unit or twig represented by $x$ in $G$.
\end{itemize}
Informally speaking, we are here viewing a net-unit as ``a unit with a removable twig.''  Note that there is no isolated node in $L$ (as otherwise Reduction Rule~\ref{rul:component} should be applicable), and each node in $N(L)$ represents a core vertex.

\begin{exrule}~\label{rule:final}
  Let $\cal U$ be a reduced unit partition in which all despotic units are $(2,1)$-, $(2,0)$-, and $(1,2)$-units.  If $s_{\mathrm{net}} + s_{2, 1} + s_{2, 0} > s_{0, 3} + s_{0, 1} + s_{0, 0}$ and there is a matching $M$ of $B_4$ that saturates $L$, then find a larger $P_2$-packing than $\cal U$ as follows.

  \begin{enumerate}[(i)]
  \item For each net-unit $U$, introduce two $P_2$'s from $G[U \cup \{v\}]$, where $v$ is the vertex represented by the node matched to $U$ in $M$.
  \item For each twig $t$, introduce a $P_2$ together with the vertex $v$ represented by the node matched to $t$ in $M$.
  \item From each other unit not involved in $M$, take an arbitrary $P_2$.
  \end{enumerate}
\end{exrule}
\begin{lemma}
  Exchange Rule~\ref{rule:final} is correct: (1) The $P_2$'s can be found; (2) they are vertex-disjoint; and (3) the number is larger than $|\cal U|$.
  \end{lemma}
\begin{proof}
  Assertion (1) is straightforward.  The vertices $v$ matched to different net-units and twigs are different. Hence, no vertex is shared by two $P_2$'s in (i) and (ii).  On the other hand, the $P_2$'s found in (iii) use vertices not involved in those of (i) and (ii).  This verifies assertion (2).  Since the matching $M$ saturates $L$, we have introduced two $P_2$'s for each net-, $(2,1)$-, and $(2,0)$-unit, and one $P_2$ for each $(1,2)$-unit.  On the other hand, by Lemma~\ref{lem:peripheral}, the vertex $v$ in both (i) and (ii) has to be a core vertex.  In other words, no pan-, $C_5$-, and bull-units are involved in $M$.  Therefore, the total number of $P_2$'s we can find is at least

  \begin{align*}
    &2(s_{\mathrm{net}} + s_{2, 1} + s_{2, 0}) + s_{1,2} + (s_{\mathrm{pan}} + s_{C_5} + s_{\mathrm{bull}})
    \\
    >&  (s_{\mathrm{net}} + s_{2, 1} + s_{2, 0}) + (s_{0, 3} + s_{0, 1} + s_{0, 0}) + s_{1,2} + s_{\mathrm{pan}} + s_{C_5} + s_{\mathrm{bull}}
    \\
    =&
       (s_{\mathrm{net}} + s_{\mathrm{pan}} + s_{C_5} + s_{\mathrm{bull}}) + (s_{2, 1} + s_{2, 0} + s_{1,2}) + (s_{0, 3} + s_{0, 1} + s_{0, 0})
    \\
    =& |\cal U|.
  \end{align*}
  Note that the last equality is true because the only despotic units in $\cal U$ are $(2,1)$-, $(2,0)$-, and $(1,2)$-units.
\end{proof}

If no matching in $B_4$ saturates $L$, then we can use Lemma~\ref{lem:hopcroft-matching} to find in polynomial time a subset $L'\subseteq L$ and a matching of $B_4$ between $N(L')$ and $L'$ that saturates $N(L')$.
Now we apply our second non-trivial reduction rule, where the ``reducible set" contains net-units and edge components.  Each net-unit contributes a $P_2$ and an extra edge.

\begin{redrule}\label{rul:net-crown}
  For a node set $L'\subseteq L$ of $B_4$, let $s'_{\mathrm{net}}$ be the number of net-units represented by $L'$, and $X$ be the set of vertices in the net-units or twigs represented by $L'$. If there exists a matching of $B_4$ between $N(L')$ and $L'$ that saturates $N(L')$, then remove $X \cup N(X)$ and decrease $k$ by $s'_{\mathrm{net}} + |N(X)|$.
\end{redrule}
\begin{lemma}\label{lem:net-crown}
  Reduction rule~\ref{rul:net-crown} is safe: $\mathtt{opt}(G) = \mathtt{opt}(G- (X \cup N(X))) + s'_{\mathrm{net}} + |N(X)|$.
\end{lemma}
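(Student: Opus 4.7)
The plan is to establish the equality by proving the two inequalities separately: the lower bound comes from an explicit construction that uses the matching, while the upper bound follows from a vertex-counting argument together with the disconnection that removing $N(X)$ induces.

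For the lower bound $\mathtt{opt}(G) \ge \mathtt{opt}(G - (X \cup N(X))) + s'_{\mathrm{net}} + |N(X)|$, I would extend a maximum packing of $G - (X \cup N(X))$ by $s'_{\mathrm{net}} + |N(X)|$ additional vertex-disjoint $P_2$'s constructed inside $X \cup N(X)$. Using the matching between $N(L')$ and $L'$ that saturates $N(L')$: for each net-unit $U \in L'$ matched to some $v \in N(L')$, I produce two vertex-disjoint $P_2$'s in $G[U \cup \{v\}]$; for each twig $T \in L'$ matched to some $v$, I produce one $P_2$ on $T \cup \{v\}$; and for each unmatched net-unit in $L'$, I take one $P_2$ inside the net itself. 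Writing $s'_m$ and $t'_m$ for the numbers of matched net-units and twigs (so that $s'_m + t'_m = |N(L')| = |N(X)|$), the count is $2 s'_m + (s'_{\mathrm{net}} - s'_m) + t'_m = s'_{\mathrm{net}} + |N(X)|$. Pairwise disjointness is immediate since the net-units and twigs of $L'$ are vertex-disjoint and every matched $v$ is used by exactly one element of $L'$ and lies outside $X$.

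For the upper bound $\mathtt{opt}(G) \le \mathtt{opt}(G - (X \cup N(X))) + s'_{\mathrm{net}} + |N(X)|$, I would first note that any $P_2$-packing of $G$ has at most $|N(X)|$ members that use a vertex of $N(X)$ (they are vertex-disjoint and each uses at least one such vertex), giving $\mathtt{opt}(G) \le \mathtt{opt}(G - N(X)) + |N(X)|$. Next I would show that in $G - N(X)$ the set $X$ is disconnected from $V(G) \setminus (X \cup N(X))$: every external neighbor of a twig is a core by Lemma~\ref{lem:peripheral}(ii), and every external neighbor of a net-unit is also a core by Lemma~\ref{lem:peripheral}(iii). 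All such core vertices lie in $N(X)$ and are therefore deleted, so the components of $G[X]$ remain components of $G - N(X)$. Hence $\mathtt{opt}(G - N(X)) = \mathtt{opt}(G[X]) + \mathtt{opt}(G - (X \cup N(X)))$. Finally, $G[X]$ is a disjoint union of $s'_{\mathrm{net}}$ nets (net-units carry no extra edges because only pendant-units may) together with isolated twig-edges, and since a net admits no two vertex-disjoint $P_2$'s we obtain $\mathtt{opt}(G[X]) = s'_{\mathrm{net}}$.

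The main technical content is a small case analysis establishing (a) that the net graph has at most one $P_2$ in any vertex-disjoint packing and (b) that for any net-unit $U$ and any vertex $v$ joined to some vertex of $U$, the graph $G[U \cup \{v\}]$ admits two vertex-disjoint $P_2$'s; both are settled by checking the attachment point of $v$ among the three pendants and three triangle vertices of the net. Everything else reduces to bookkeeping on the matching and the structural lemmas already established.
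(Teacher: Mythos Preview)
Your proposal is correct and follows essentially the same two-inequality structure as the paper's proof: construct $s'_{\mathrm{net}} + |N(X)|$ disjoint $P_2$'s inside $G[X\cup N(X)]$ via the matching for the lower bound, and for the upper bound first strip off $N(X)$ and then observe that $X$ breaks into net-units and twigs in $G-N(X)$. Your version is in fact slightly more explicit than the paper's---you separate matched from unmatched net-units in the count and invoke Lemma~\ref{lem:peripheral}(ii),(iii) to justify the disconnection---but the underlying argument is identical.
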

\begin{proof}
  Let $G' = G- (X \cup N(X))$. Note that $N(X)$ is precisely the set of vertices represented by $N(L')$. Let $v$ be a vertex in $N(X)$.  If its representing node is matched to a twig, then $v$ and the twig together make a $P_2$.  Otherwise, it is matched to a net-unit $U$; i.e., $U$ is adjacent to $v$.  We can find two adjacent vertices in $U$ such that they together with $v$ make another $P_2$, and their removal from $U$ leaves a $P_2$.  Thus the number of $P_2$'s we can find in $G[X\cup N(X)]$ is $|N(X)| + s'_{\mathrm{net}}$.  Therefore, $\mathtt{opt}(G) \ge \mathtt{opt}(G- (X \cup N(X))) + s'_{\mathrm{net}} + |N(X)|$.

  Any $P_2$-packing of $G$ contains at most $|N(X)|$ vertex-disjoint $P_2$'s involving vertices in $N(X)$, hence $\mathtt{opt}(G) \le \mathtt{opt}(G-N(X)) + |N(X)|$.  Each component in the subgraph of $G - N(X)$ induced on $X$ is either a net-unit or a twig represented by a node in $L'$. Thus $\mathtt{opt}(G-N(X)) = \mathtt{opt}(G- (X \cup N(X))) + s'_{\mathrm{net}}$, and $\mathtt{opt}(G) \le \mathtt{opt}(G-(X \cup N(X))) + s'_{\mathrm{net}} + |N(X)|$.
\end{proof}

\begin{corollary}\label{cor:big-units}
  Let $\cal U$ be a reduced unit partition where $(2,1)$-, $(2,0)$- and $(1,2)$-units are the only despotic units.  If $s_{\mathrm{net}} + s_{2, 1} + s_{2, 0} > s_{0, 3} + s_{0, 1} + s_{0, 0}$, then at least one of Exchange Rule~\ref{rule:final} and Reduction Rule~\ref{rul:net-crown} is applicable.
\end{corollary}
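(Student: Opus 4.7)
The plan is to invoke Lemma~\ref{lem:hopcroft-matching} on the bipartite graph $B_4$ and split into two cases corresponding to its two possible outputs. First I would observe that under the assumed inequality $s_{\mathrm{net}} + s_{2,1} + s_{2,0} > s_{0,3} + s_{0,1} + s_{0,0}$, the left-hand side is at least one, so $L$ is nonempty (every $(2,1)$- or $(2,0)$-unit contributes at least one twig node, and every net-unit contributes a node). This guarantees that Lemma~\ref{lem:hopcroft-matching} returns a meaningful object.

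In the first case, Lemma~\ref{lem:hopcroft-matching} yields a matching of $B_4$ saturating $L$. Combined with the hypothesis $s_{\mathrm{net}} + s_{2,1} + s_{2,0} > s_{0,3} + s_{0,1} + s_{0,0}$ and the standing assumption that $(2,1)$-, $(2,0)$-, and $(1,2)$-units are the only despotic units, the precondition of Exchange Rule~\ref{rule:final} is satisfied verbatim, so that rule applies.

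In the second case, Lemma~\ref{lem:hopcroft-matching} returns a subset $L' \subseteq L$ together with a matching of $B_4$ between $N(L')$ and $L'$ saturating $N(L')$. Since no matching saturates $L$, the Hall defect argument (already spelled out just before Lemma~\ref{lem:hopcroft-matching}) forces $L'$ to be nonempty, so the set $X$ defined in Reduction Rule~\ref{rul:net-crown} is nonempty and the rule performs a genuine reduction; the hypothesis of that rule is precisely what Lemma~\ref{lem:hopcroft-matching} delivers, so it is applicable.

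I do not expect any real obstacle here: the corollary is essentially a bookkeeping statement packaging the two outputs of Lemma~\ref{lem:hopcroft-matching} into the two rules that were designed for them. The only minor point worth verifying is that $L \neq \emptyset$ and, in the second branch, $L' \neq \emptyset$, both of which follow immediately from the strict inequality in the hypothesis and from the failure of Hall's condition, respectively.
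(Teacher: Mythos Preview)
Your proposal is correct and follows exactly the approach implicit in the paper: the corollary is stated without proof there, but the paragraph immediately preceding Reduction Rule~\ref{rul:net-crown} already explains that when no matching of $B_4$ saturates $L$, Lemma~\ref{lem:hopcroft-matching} supplies the set $L'$ needed for Reduction Rule~\ref{rul:net-crown}, while a saturating matching feeds directly into Exchange Rule~\ref{rule:final}. Your added checks that $L\neq\emptyset$ and $L'\neq\emptyset$ are sound refinements that the paper leaves tacit.
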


\section{The algorithm}

We now summarize the kernelization algorithm in Figure~\ref{fig:alg} and use it to prove Theorem~\ref{thm:main}, for which we need to show its correctness and analyze its running time.

\begin{figure}[h!]
  \centering
  \begin{tikzpicture}
    \path (0,0) node[text width=.8\textwidth, inner xsep=20pt, inner ysep=10pt] (a) {
    \begin{minipage}[t!]{\textwidth} \small
  {\sc Input}: a graph $G$ and an integer $k$.
  \\
  {\sc Output}: a graph $G'$ with $|V(G')| \le 5k$ and an integer $k'$.

  \begin{tabbing}
    AAa\=Aaa\=aaa\=Aaa\=MMMMMMAAAAAAAAAAAAA\=A \kill

    1.\> {\bf if} $k \le 0$ {\bf then return} a trivial yes-instance.
    \\
    2.\> {\bf if} $|V(G)| \le 5 k$ {\bf then return} ($G, k$).
    \\
    3.\> Apply Reduction Rule~\ref{rul:component} exhaustively.
    \\
    4.\> Set $\cal P$ to be an arbitrary maximal $P_2$-packing.
    \\
    5.\> {\bf if} $|{\cal P}| \ge k$ {\bf then return} a trivial yes-instance.
    \\
    6.\> {\bf if} the condition of Lemma~\ref{lem:matching} is true {\bf then}
    \\
    \>\> apply Reduction Rule~\ref{rul:p2-crown}; {\bf goto 1}.
    \\
    7.\> {\bf else} build a unit partition {$\cal U$};
    \\
    8.\> {\bf if} Exchange Rule~\ref{rule:single-unit} or \ref{rule:two-units} is applicable {\bf then}
    \\
    \>\> apply it; {\bf goto 5}.
    \\
    9.\> {\bf if} Exchange Rule~\ref{rule:consolidate-1} or \ref{rule:consolidate-2} is applicable {\bf then}
    \\
    \>\> apply it; {\bf goto 8}.
    \\
    \comment{  Now the unit partition is reduced.}
    \\
    10.\> {\bf if} there is a $(4,0)$-, $(3,1)$-, $(3,0)$- or $(2,2)$-unit {\bf then}
    \\
    \>\> {\bf if} Exchange Rule~\ref{rule:twig} is applicable {\bf then}
    \\
    \>\>\> apply it; {\bf goto 8}.
    \\
    \>\> {\bf else} apply Reduction Rule~\ref{rul:p2-crown} (Lemma~\ref{lem:twig-crown}); {\bf goto 1};
    \\
    11.\> {\bf if} there is a $(1,4)$-, $(1,3)$- or $(0,4)$-unit {\bf then}
    \\
    \>\> {\bf if} Exchange Rule~\ref{rule:leaf} is applicable, {\bf then}
    \\
    \>\>\> apply it; {\bf goto 8}.
    \\
    \>\> {\bf else} apply Reduction Rule~\ref{rul:p2-crown} (Lemma~\ref{lem:leaf-crown}); {\bf goto 1}.
    \\
    12.\> {\bf if} the number of net-, $(2,1)$-, $(2,0)$-units $>$ the number of small units {\bf then}
    \\
    \>\> {\bf if} Exchange Rule~\ref{rule:final} is applicable {\bf then}
    \\
    \>\>\> apply it; {\bf goto 5}.
    \\
    \>\> {\bf else} apply Reduction Rule~\ref{rul:net-crown} (Corollary~\ref{cor:big-units}); {\bf goto 1}.
    \\
    13.\> {\bf return} $(G, k)$.
\end{tabbing}
    \end{minipage}
  };
    \draw[draw=gray!60] (a.north west) -- (a.north east) (a.south west) -- (a.south east);
  \end{tikzpicture}
  \caption{A summary of our algorithm.  A trivial yes-instance can be an empty graph and $k = 0$.
    Note that if the condition of step 8 is satisfied, we will execute it, and then go back to step 5.  It is similar for steps~9--12.  In particular, we will never execute two of steps~9--12 consecutively.}
\label{fig:alg}
\end{figure}

In general, we apply the exchange and reduction rules in the order they are presented; more specifically, we apply a high numbered rule only when no previous rule is applicable.  But the application of a later rule may enable a previous one.  As a consequence, the analysis of running time is focused on the maximum number of times they are applied.  The first two reduction rules and the first two exchange rules will not concern us, because they either remove vertices from the graph or increase the size of the $P_2$-partition $\cal P$: Note that the size of $\cal P$ never decreases, and the algorithm is terminated when it reaches $k$.  For the same reason, Exchange Rule~\ref{rule:final} and Reduction Rule~\ref{rul:net-crown} can be applied at most $k$ and $n$ times respectively.

The following invariant is immediate from the definitions of our exchange rules: A net-unit can only be changed by the reduction rules or Exchange Rules~\ref{rule:single-unit}, \ref{rule:two-units}, and~\ref{rule:final}.
Since each application of Exchange Rule~\ref{rule:consolidate-1} introduces a new net-unit, it cannot be applied more than $k$ times without triggering previous rules.

We are thus left with Exchange Rules~\ref{rule:consolidate-2}--\ref{rule:leaf}, which are our main trouble.  If the transformation among the units is acyclic (under the condition that both $G$ and $\cal P$ are unchanged), then we can find an invariant similar as above.  This is unfortunately not this case.  For example, Exchange Rule~\ref{rule:consolidate-2} may turn a $(1, 2)$-unit into a $(0, 1)$-unit, but Exchange Rule~\ref{rule:leaf} may change it back.  The way we surmount the obstacle is to define a measure that can only \emph{decrease} when $G$ or $\cal P$ is changed.  Since its value increases when any of Exchange Rules~\ref{rule:consolidate-2}--\ref{rule:leaf} is applied, but it never exceeds  $O(k)$, we bound the number of times they can be applied.

Let $s_{\mathrm{democratic}}$, $s_{\ge 1, *}$, $s_{*, \ge 1}$, and $s_{*, \ge 2}$ denote the numbers of, respectively, democratic units, units with $d_1(U) \ge 1$, units with $d_2(U) \ge 1$, and units with $d_2(U) \ge 2$, and let $s_{\mathrm{twig}}$ denote the total number of twigs among all despotic units.
We define a measure
\[
  s = 51 s_{\mathrm{democratic}} + 6s_{\mathrm{twig}} + (s_{\ge 1, *} + s_{*, \ge 2} + s_{*, \ge 1}) + 4 s_{0, 1}.
\]
Note that a unit may contribute more than one term to this measure.  For example, a $(1, 2)$-unit contributes to the four terms $s_{\mathrm{twig}}$, $s_{\ge 1, *}$, $s_{*, \ge 2}$, and $s_{*, \ge 1}$.
See Table~\ref{tbl:table0} for the contributions to $s$ of different units.

\begin{table}[h]
  \caption{Total contributions of despotic and small units to the measure $s$.}
  \label{tbl:table0}
  \begin{tabular}{c|c|c|c|c|c|c|c|c|c|c|c|c}
   $(4, 0)$ & $(3, 1)$ & $(3, 0)$ & $(2, 2)$ & $(2, 1)$ & $(2, 0)$ & $(1, 4)$ & $(1, 3)$ & $(1, 2)$ & $(0, 4)$ & $(0, 3)$ & $(0, 1)$ & $(0, 0)$ \\ \hline
   25       & 20       & 19       & 15       & 14       & 13       & 9        & 9        & 9        & 2        & 2        & 5        & 0
  \end{tabular}
\end{table}

\begin{lemma}\label{lem:increasing}
  After each application of Exchange Rules~\ref{rule:consolidate-2}--\ref{rule:leaf}, either a unit with two vertex-disjoint $P_2$'s is produced, or $s$ increases.
\end{lemma}
\begin{proof}
  Since an application of Exchange Rules~\ref{rule:consolidate-2}--\ref{rule:leaf} changes the types of two units, $s$ can decrease by at most $25 *2 = 50$. If the number of democratic units increases, then $s$ increases.
  Hence, we may assume that no democratic unit is introduced during this application.
	
  Let $U_-$ and $U_+$ denote respectively the two units losing and gaining vertices in this operation.
  If they do not become democratic, and $\cal P$ does not increase, then they become despotic or small units after the operation.
  Note that when $U_-$ is a $(0, 1)$-unit before applying Exchange Rule~\ref{rule:consolidate-2}, $U_+$ has to be a $(0, 1)$- or $(0, 3)$-unit as well.  Therefore, it is immediate from Table~\ref{tbl:table1} that  the increment is always strictly larger than the decrement.  In other words, $s$ always increases with this operation.
\end{proof}

\begin{table}[h]
  \caption{The change of measure incurred by applying Exchange Rules~\ref{rule:consolidate-2}--\ref{rule:leaf}; e.g., the measure decreases by one if a vertex is moved \emph{out of} a $(3, 1)$-unit by Rule~\ref{rule:consolidate-2}, while it increases by five if a vertex is moved \emph{into} a $(3, 1)$-unit by Rule~\ref{rule:consolidate-2}.  A $\bot$ means that this type of units cannot participate in this operation in the specific role; e.g., we are not allowed to move any vertex {out of} a $(3, 0)$-unit using Rule~\ref{rule:consolidate-2}.  Note that in none of the rules, $U_-$ can be a $(0, 0)$-unit, or $U_+$ a $(4, 0)$-, $(3, 0)$- or $(1, 4)$-unit.}
  \label{tbl:table1}
  \begin{tabular}{c|c|c|c|c|c|c|c|c|c|c|c|c}
    $U_-$  & $(4, 0)$ & $(3, 1)$ & $(3, 0)$ & $(2, 2)$ & $(2, 1)$ & $(2, 0)$ & $(1, 4)$   & $(1, 3)$  & $(1, 2)$ & $(0, 4)$ & $(0, 3)$   & $(0, 1)$
    \\ \hline
    Rule 4 & $\bot$        & $-1$       & $\bot$        & \multicolumn{2}{c|}{$-1$}       & $\bot$        & \multicolumn{2}{c|}{0} & $-4$       & 0        & $-2$         & $-5$
    \\ \hline
    Rule 5 & \multicolumn{4}{c|}{$-6$}                   & \multicolumn{8}{c}{$\bot$}                                                                      \\ \hline
    Rule 6 & \multicolumn{6}{c|}{$\bot$}                                          & \multicolumn{2}{c|}{0} & $\bot$        & 0        & \multicolumn{2}{c}{$\bot$}
  \end{tabular}

  \begin{tabular}{c|c|c|c|c|c|c|c|c|c|c|c}
    $U_+$  & $(3, 1)$   & $(2, 2)$  & $(2, 1)$   & $(2, 0)$   & $(1, 4)$ & $(1, 3)$ & $(1, 2)$ & $(0, 4)$   & $(0, 3)$   & $(0, 1)$ & $(0, 0)$ \\ \hline
    Rule 4 & \multicolumn{3}{c|}{ +5}  & \multicolumn{2}{c|}{$\bot$}         & +6              & +5       & \multicolumn{2}{c|}{ +7}  & +8     & $\bot$        \\ \hline
    Rule 5 & \multicolumn{7}{c|}{$\bot$}                                                                              & \multicolumn{2}{c|}{+7} & +9      & +9/+13  \\ \hline
    Rule 6 & \multicolumn{2}{c|}{$\bot$} & \multicolumn{2}{c|}{+1} & \multicolumn{5}{c|}{$\bot$}                                                     & +4/+8    & +2/+5
  \end{tabular}
\end{table}

Putting them together, we are now ready to prove our main theorem.

\begin{proof}[Proof of Theorem~\ref{thm:main}]
  We demonstrate the correctness first.  Most of the steps are straightforward, except steps 10--12, for which we need to show that the unit partition is reduced when any of the steps 10--12 is executed.
  Let ${\cal U}$ be the unit partition before the execution of one of steps 10--12.  By the algorithm,  neither Reduction Rule~\ref{rul:component} nor Exchange Rules~\ref{rule:single-unit}--\ref{rule:consolidate-2} are applicable on ${\cal U}$: Note that Reduction Rule~\ref{rul:component} can become applicable again after another reduction rule, and we check immediately for each of them (steps 6, 10, 11, 12).  It hence remains to verify that ${\cal U}$ is a general unit partition.

  After exhaustive executions of step 8,
  the unit partition created by step 7 becomes general,
  and contains no $(1,4)$-units by Proposition~\ref{pro:raw-unit-partition}.
  By Propositions~\ref{pro:14unit}, \ref{pro:twig}, and \ref{pro:leaf},
  the unit partition ${\cal U}$ remains general.

  We then calculate the size of the kernel returned by our algorithm and show $|V(G)| \le 5k$.  We have nothing to worry if it is from step~1, 2, or 5.  In the rest the kernel is returned from step~13.  It is accompanied with a reduced unit partition in which (1) all non-democratic units satisfy $d_1(U) \le 2$ and $d_1(U) + d_2(U) \le 3$; and (2) $s_{\mathrm{net}} + s_{2, 1} + s_{2, 0} \le s_{0, 3} + s_{0, 1} + s_{0, 0}$.  The number of vertices in the resulting graph is
  \begin{align*}
    & 6 s_{\mathrm{net}} + 5(s_{\mathrm{pan}} + s_{C_5} + s_{\mathrm{bull}}) + 6s_{2, 1} + 5(s_{2, 0} + s_{1, 2}) + 4(s_{0, 3} + s_{0, 1}) + 3 s_{0, 0}
    \\
    \le & 5 (s_{\mathrm{net}} + s_{2, 1} + s_{\mathrm{pan}} + s_{C_5} + s_{\mathrm{bull}} + s_{2, 0} + s_{1, 2}) + (s_{\mathrm{net}} + s_{2, 1}) + 4(s_{0, 3} + s_{0, 1} + s_{0, 0})
    \\
    \le & 5 (s_{\mathrm{democratic}} + s_{2, 1} + s_{2, 0} + s_{1, 2}) + (s_{0, 3} + s_{0, 1} + s_{0, 0}) + 4(s_{0, 3} + s_{0, 1} + s_{0, 0})
    \\
    =& 5 (s_{\mathrm{democratic}} + s_{2, 1} + s_{2, 0} + s_{1, 2} + s_{0, 3} + s_{0, 1} + s_{0, 0})
    \\
    <& 5 k.
  \end{align*}

  It remains to show that the algorithm terminates in polynomial time.

  There are three destinations for the goto statements in the algorithms, namely steps 1, 5, and 8.  Note that before going back to step 1, from step 6, 10, 11, or 12, one of Reduction Rules~\ref{rul:p2-crown} and~\ref{rul:net-crown} is applied; in other words, each execution of step 1 is with a decreased $k$.  Since $k$ never increases during the algorithm, the algorithm can execute step 1 at most $k$ times.
  Steps 2--5 are run in sequence after step~1, unless an exit condition is satisfied and the algorithm is terminated (step~2 or 5).  
  Each application of Reduction Rule~\ref{rul:component} removes at least one vertex from the graph.  Hence, it can be checked $n + k$ times and applied $n$ times.

The algorithm may go back to step~5 from step 8 or 12; in either case, we have a strictly larger $P_2$-packing $\cal P$.  Since the algorithm terminates when the cardinality of $\cal P$ reaches $k$, it can reach step~5 at most $k$ times without executing step~1 (by a goto statement).  In total, step~5 can be run at most $k^2$ times.
The total times that step~6 or 8 are reached with the condition satisfied, hence going back to step 1 or 5 respectively, have been bounded.  Otherwise, they run in sequel till step~9.

Since each application of Exchange Rule~\ref{rule:consolidate-1} introduces a new net-unit, it cannot be applied more than $k$ times without triggering previous rules, hence going back to step~1 or 5.

According to Lemma~\ref{lem:increasing}, if the algorithm goes back to step~8 from any of step~9--12, either Exchange Rule~\ref{rule:single-unit} becomes applicable (hence we apply it and go back to step~5), or the measure is increased.  Since the measure is $O(k)$, the four steps (9--12) cannot be run more than $O(k)$ times without reaching step~1 or 5.  Therefore, the total number of times they can be run is $O(k^3)$.

If the condition of step~12 is true, then the algorithm goes back to either step~5 or step~1.  Hence it can be run at most $k + k + k^2 = O(k^2)$ times.

Therefore, each rule is checked and applied $O(n + k^3)$ times.  Since each of them can be checked and applied in polynomial time, the total running time is polynomial on $n$ and $k$.  This concludes the analysis of running time and the proof.
\end{proof}

{
  \bibliographystyle{plainurl}

}
\end{document}